\DeclareMathAlphabet{\mathantt}{OT1}{antt}{li}{it}
\DeclareMathAlphabet{\mathpzc}{OT1}{pzc}{m}{it}
\DeclarePairedDelimiter\norm{\lVert}{\rVert}%
\newtheorem{theorem}{Theorem}
\theoremstyle{remark}
\newtheorem{lemma}[theorem]{Lemma}
\newtheorem{definition}[theorem]{Definition}
\newtheorem{notation}[theorem]{Notation}
\DeclareFontFamily{OT1}{pzc}{}
\DeclareFontShape{OT1}{pzc}{m}{it}%
  {<-> s * [1.1] pzcmi7t}{}
\DeclareMathAlphabet{\mathpzc}{OT1}{pzc}%
                     {m}{it}
\def\R{\mathcal{R}}
\def\I{\mathcal{I}}
\def\J{\mathcal{J}}
\def\C{\mathcal{C}}
\DeclareMathOperator{\argmax}{\arg\max}
\newcommand*\diff{\mathop{}\!\mathrm{d}}
\title{Joint CoMP-Cell Selection and Resource Allocation in
Fronthaul-Constrained C-RAN}
\author[1]{Lei You}
\author[1,2]{Di Yuan}
\affil[1]{{\small Department of Information Technology, Uppsala University, Sweden}}
\affil[2]{{\small Department of Science and Technology, Link\"{o}ping University, Sweden}}
\affil[ ]{\small\texttt{\{lei.you; di.yuan\}@it.uu.se}\/}
\begin{document}

\maketitle

\begin{abstract}
Cloud-based Radio Access Network (C-RAN) is a promising architecture for future
cellular networks, in which Baseband Units (BBUs) are placed at a centralized
location, with capacity-constrained fronthaul connected to multiple distributed
Remote Radio Heads (RRHs) that are far away from the BBUs. The centralization of
signal processing enables the flexibility for coordinated multi-point
transmission (CoMP) to meet high traffic demand of users. We
investigate how to jointly optimize CoMP-cell selection and base station resource
allocation so as to enhance the quality of service (QoS), subject to the
fronthaul capacity constraint in orthogonal frequency-division multiple access
(OFDMA) based C-RAN\@. The problem is proved to be $\mathcal{NP}$-hard in this
paper. To deal with the computational complexity, we derive a partial optimality
condition as the foundation for designing a cell-selection algorithm. Besides, we
provide a solution method of the optimum of the time-frequency
resource allocation problem without loss of fairness on the QoS enhancement of
all users. The simulations show good performance of the proposed algorithms for
jointly optimizing the cell selection and resource allocation in a C-RAN, with
respect to QoS.
\end{abstract}

\section{Introduction}

\subsection{Background}

Mobile networks are evolving rapidly in terms of coverage, capacity and new
features, continuously pushed by new requirements related to latency, traffic
volumes and data rates~\cite{EricssonAB:2015tv}. The operators are seeking for
promising ways to increase the
flexibility of cellular infrastructures so as to simplify the deployment and
management of the network integrated with different communication technologies,
e.g.~mmWave, Li-Fi, coordinated multipoint transmission (CoMP),
etc.~\cite{EricssonAB:2015tv}. Cloud-based Radio Access Network (C-RAN),
referring to the virtualization of base station functionalities by means of
cloud computing, results in a novel cellular architecture that cost-efficiently
enables centralization and cloudization of large-scale cooperative signaling
processing in a network-wide manner~\cite{Peng:2015bx}, and thus reducing the
overall network complexity~\cite{Peng:2015gc} in respect of management and
deployment. 

In a C-RAN, the functions of Baseband Units (BBUs) and Remote Radio Heads (RRHs)
are redefined, with some of the BBU processing functions being shifted to the
RRH, which leads to a change in the BBU and RRH
architecture~\cite{Anonymous:Mvikarf0}.  The BBUs are aggregated in a pool and
co-located in a Radio Cloud Center (RCC), and the RRHs that are separately
distributed away from the RCC form Radio Remote Systems (RRSs). The link
connecting a BBU with an RRH is referred to as fronthaul.  One advantage of C-RAN lies in
its ability to implement the CoMP transmission among multiple RRHs, based on its
centralization of signal
processing~\cite{Anonymous:Mvikarf0,ViaviSolutionsInc:2015th,Artuso:2015hm}. On
the other hand, CoMP may result in more data traffic on fronthaul, which means
that the CoMP performance depends on the capacity of
fronthaul~\cite{Anonymous:Mvikarf0,ViaviSolutionsInc:2015th,Artuso:2015hm}.
With the limited fronthaul capacity and bit-rate demands of user
equipments (UEs), the CoMP performance benefits from optimizing the association
pattern between RRHs and UEs.  

\subsection{Motivation}

Several literature has focused on the CoMP techniques in C-RAN\@.
In~\cite{Zhang:2015kr}, the authors studied the CoMP-based interference
mitigation in heterogeneous C-RANs deployed with small cells.
In~\cite{Davydov:2013cf}, the authors investigated the Joint Transmission (JT)
CoMP performance in C-RANs with large CoMP cluster sizes. Also, it is shown
by~\cite{Davydov:2013cf} that CoMP transmission can be efficiently and
effectively implemented based on the cooperation of a limited set of stations
forming a so-called \textit{``CoMP cluster''} in a C-RAN\@. The authors
in~\cite{Beylerian:2016do} investigated the resource allocation of CoMP
transmission in C-RAN, and proposed a fairness-aware user-centric scheme for
enhancing the network coverage and achievable rate. In~\cite{MugenPeng:2014ec},
state-of-the-art and challenges of heterogeneous C-RANs are surveyed. The
authors showed that limited fronthaul capacity affects the CoMP performance,
pointing out that optimal resource allocation solutions call for investigation 
under the fronthaul capacity constraint.  In~\cite{Ortin:2016jq}, the
authors studied jointly cell-selection and resource allocation problems, in
C-RANs of non-CoMP case. In~\cite{Abdelnasser:2016ho}, the resource allocation
problem was studied in OFDMA-based C-RANs, with the framework of small cell
underlaying a macro cell\@.

We remark that our work is motivated by the research mentioned above, as all of
them stressed the benefits on network performance improvement by optimization of
cell selection as well as resource allocation in C-RANs. On the other hand, 
so far we have not found literature that studies how
to jointly optimize the cell selection and network-wide time-frequency resource
allocation of CoMP transmission in C-RANs\@. From our perspective, in the CoMP
scenarios, it is more crucial to optimize the time-frequency resource allocation
subject to the constraints on QoS demands compared to the non-CoMP case, as CoMP
introduces more dependencies of resource consumption and control signaling 
among cells/stations involved in cooperations. Also, under the capacity limit of
fronthaul, the QoS satisfaction is highly affected by the CoMP-cell selection
along with the resource allocation strategies. 

\subsection{Contributions}
In this paper, we study how to jointly optimize the CoMP-cell selection and the
time-frequency resource allocation among cells, subject to the limit of
fronthaul capacity, in order to maximize the fairness-aware QoS. We prove the
$\mathcal{NP}$-hardness of the problem, and provide theoretical insights as
foundation of designing an efficient cell selection algorithm. As for the
resource allocation, we derive a solution method with respect to optimizing QoS
performance, subject to the QoS fairness constraints among all users. We proved
that the solution method for resource allocation achieves the optimum, under any
chosen CoMP-cell selection. Also, we show theoretically that the proposed
CoMP-cell selection algorithm and the resource allocation method are naturally
combined with each other, which is shown numerically to achieve good performance
on fairness-aware QoS.

\section{System Model}
\label{sec:sys_model}

\subsection{General Description}
The setting of the system model is coherent with~\cite{Anonymous:Mvikarf0}.
We consider a C-RAN with a centralized RCC, which has $n$ BBUs in the BBU pool.
 Each BBU is connected with several RRHs by fronthaul. This BBU and its
connected RRHs form a so-called \textit{``C-RAN Cluster''}, and is referred to
as ``cluster'' in this paper for simplicity.  The system model is illustrated in
\figurename~\ref{fig:sys_model}. We consider JT-CoMP\@. Within a cluster, the
BBU calculates the coordinated beamformer for each RRH and all co-clustered RRHs
can be jointly coordinated to serve UEs. Without loss of generality, we assume
that each RRH is located on one Base Station (BS)\@. Downlink transmission is
studied in this paper.

\begin{figure}[!h]
    \centering
    \includegraphics[width=\linewidth]{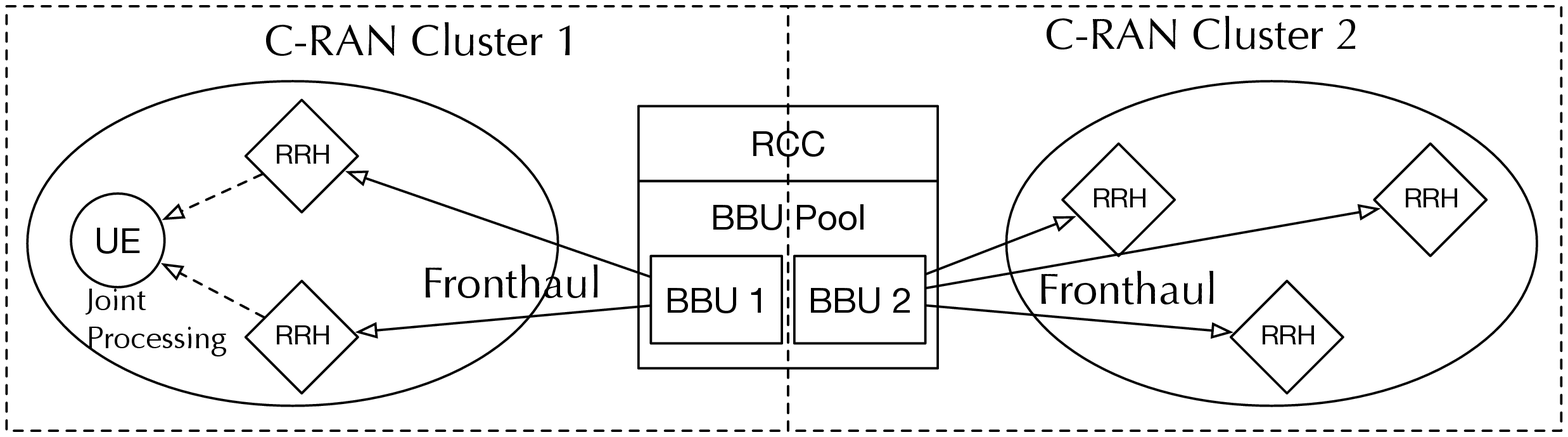}
    \caption{System model illustration.}
\label{fig:sys_model}
\end{figure}

\subsection{Basic Notations}
Denote by $\C=\{1,2,\ldots,n\}$ the set of C-RAN clusters. Denote by
$\R=\{1,2,\ldots,m\}$ the set of RRHs in the C-RAN\@. Denote by
$\J=\{1,2,\ldots,q\}$ the set of UEs. Since each cluster has only one BBU and
there is no shared BBU among clusters, we also use $1,2,\ldots,n$ to refer to
the corresponding RRHs for the $n$ clusters, respectively. For the same reason,
we use $1, 2,\ldots,m$ to refer to the corresponding BSs located with the $n$
RRHs as well as the corresponding fronthaul, respectively. 
To avoid being ambiguous, the network entities that we refer to by using the
indexes, will be explicitly clarified, in accordance with the context.
Denote by $c_i$ the capacity of the fronthaul connected to RRH $i$
($i\in\R$). Denote by $\ell(i)$ and $\ell(j)$ the cluster in which RRH $i$
($i\in\R$) and UE $j$ ($j\in\J$) located, respectively.
For each UE $j$, denote by $\I_j$ the set of BSs/RRHs serving UE $j$ via CoMP\@.
We use the symbol $\circ$ to denote function composition.

\subsection{CoMP Transmission}

Denote by $p_i$ the transmit power of BS $i$, $i\in\R$. Denote by $\bm{h}_{ij}$
the channel gain between BS $i$ and UE $j$. Denote by $\bm{w}_{i}$ the precoder
of BS $i$. Let $\bm{x}$ be the channel input symbol sent by the cooperating BSs
$\I_j$. Entity $\bm{x}_k$ denotes the channel input symbol sent by the other
cells that are not cooperatively serving UE $j$. The received channel output at
UE $j$ can be written as 
\begin{equation}
    \bm{s} =
    \sum_{i\in\I_j}\sqrt{p}_{i}\bm{h}^{\mathsf{H}}_{ij}\bm{w}_{i}\bm{x}+\sum_{k\in\R\backslash\I_j}\sqrt{p}_k\bm{h}^{\mathsf{H}}_{kj}\bm{w}_{k}\bm{x}_{k}+\sigma
\end{equation}

Assuming that $\bm{x}$ and $\bm{x}_k$ $k\in\R\backslash\I_j$ are independent
zero-mean random variables of unit variance, the SINR of UE $j$ is given by
the equation as below~\cite{Nigam:2014cd},
\begin{equation}
    \gamma_j = \frac{\left|\sum_{i\in\I_j}\sqrt{p}_{i}\bm{h}^{\mathsf{H}}_{ij}
    \bm{w}_i\right|^2}{\sum_{k\in\R\backslash\I_j}p_{k}I_{kj}+\sigma^2}
\label{eq:sinr}
\end{equation}
where we have $I_{kj} = |\bm{h}^{\mathsf{H}}_{kj}\bm{w}_{k}|^2 \rho_k$ and
$\rho_k$ is the indicator of that whether $j$ receives interference from BS
$k$.

For the sake of presentation, we use the binary matrix
$\bm{\kappa}\in{\{0,1\}}^{m\times q}$ as the indicator for the association
relationship between BSs and UEs, by defining $\I_j$ ($j\in\J$) as
a mapping of $\bm{\kappa}$ to a set of BS(s), as shown
in~\eqref{eq:kappa}. The obligation of each cluster for providing service to UEs
is clarified in~\eqref{eq:obligation}.

\begin{equation}
    \I_j:{\{0,1\}}^{m\times q}\rightarrow \mathbb{2}^{\{1,2,\ldots,m\}}:\bm{\kappa}\mapsto\{i:\kappa_{ij}=1\}
\label{eq:kappa}
\end{equation}
\begin{equation}
    \kappa_{ij}=0 \quad \ell(i)\neq\ell(j)
\label{eq:obligation}
\end{equation}

\subsection{Resource Allocation}
Consider in OFDMA any fixed short period $T=[\tau_s,\tau_e]$ of a few hundred
milliseconds, with $\tau_s$ and $\tau_e$ be the starting and ending time points,
respectively. Without loss of generality, we use the term ``Resource Unit''
(RU) as the minimum unit for OFDMA time-frequency resource allocation. Suppose
that in each BS the total number of RUs is $M$.  For the cell(s) serving UE $j$
($j\in\J$), we denote $\alpha_j$ as the proportion of allocated RUs for
transmission to UE $j$ over the total number $M$ of RUs in each BS\@. (Note that
in JT-CoMP, the BSs that are cooperatively serving one UE use the same
time-frequency resource for transmission to the UE, and thus there is no need to
use index $\alpha_{ij}$ and $\alpha_{kj}$ for differentiating BS $i$ and $k$,
$i,k\in\I_j$.) For the network-wide resource allocation, we have $\bm{\alpha} =
{[\alpha_1,\alpha_2,\ldots,\alpha_q]}^{\mathsf{T}}$. There is no 
specific bias on resource allocation for transmission, and thus each RU has the
same opportunity used by BS for serving UE\@. The term
$\sum_{j\in\I_k}\alpha_j$ computes the proportion of occupied RUs in BS $k$. For any
RU that is being used for transmission, the probability that it is interfered by
another BS $k$ (assuming $k$ is not a cooperative BS in this transmission) is
$\sum_{j\in\I_k}\alpha_j$. For characterizing the influence of resource
allocation on the inter-cell interference, we let $\rho_k$ be the mapping
$\rho_k:\mathbb{R}_{+}^{q}\rightarrow\mathbb{R}_+:
\bm{\alpha}\mapsto\sum_{j\in\J_k}\alpha_j$ in~\eqref{eq:sinr}.  Also, $\rho_k$
is named as the load of BS $k$, for any $k\in\R$.

Denote by $B$ the bandwidth per RU\@. The achievable bit rate of a (JT) link to
UE $j$ can be calculated by $C_j:\mathbb{R}_{+}^{q}\rightarrow\mathbb{R}:\bm{\alpha}\mapsto
MB\log(1+\gamma_j\circ\bm{\rho}(\bm{\alpha}))$ according to the Shannon's
capacity. We assume the data traffic requested by UE $j$ $(j\in\J)$ takes place
with density $t_j$ over time period $T$, and
$\bm{t}=[t_1,t_2,\ldots,t_q]$. As for
ensuring the allocated RUs to UE $j$ is sufficient for transmission, we should
have in~\eqref{eq:coupling} that
\begin{equation}
    \alpha_{j}\geq
    \int_{T}\frac{\eta_{j}t_j(\tau)}{C_j(\bm{\alpha},\bm{\kappa})}\diff\tau
\label{eq:coupling}
\end{equation}
where $\eta_j$ is UE $j$'s bit rate scaling parameter over the time period $T$.
In other words, $\int_{T}\eta_j t_{j}(\tau)\diff\tau$ is the data traffic volume
transmitted to UE $j$ scaled by $\eta_j$ over period $T$. Network-widely, we have
$\bm{\eta}= {[\eta_1,\eta_2,\ldots, \eta_n]}^{\mathsf{T}}$.  As for the
constraint of limited resource and fronthaul capacity,~\eqref{eq:limited} holds,
where $\bar{\rho}$ is the maximum resource limit of BSs.
\begin{equation}
    \sum_{j\in\J_i}\int_{T}\eta_{j}t_j(\tau)\diff\tau\leq c_i\textnormal{ and
    }\rho_i(\bm{\alpha},\bm{\kappa}) \leq
    \bar{\rho}~~~i\in\R
\label{eq:limited}
\end{equation}

Note that $\eta_j$ reflects the ``\textit{QoS satisfaction}'' of UE $j$. In
other words, $\eta_j$ indicates how well the traffic demand of UE $j$ is
satisfied, under the constraints~\eqref{eq:coupling} and~\eqref{eq:limited}.
Besides, we take into consideration the fairness of QoS satisfaction among
UEs. The metric \textit{Jain's Fairness} is used for evaluation, i.e.
$J:\mathbb{R}_{+}^q\rightarrow\mathbb{R}:\bm{\eta}\mapsto\norm{\bm{\eta}}_1^2/q\norm{\bm{\eta}}^2$.
It is guaranteed by~\eqref{eq:fairness} that the best fairness among UEs is
achieved.
\begin{equation}
    J(\bm{\eta})=1
\label{eq:fairness}
\end{equation}

\subsection{Problem Formulation}
The investigated problem is to maximize the QoS by CoMP-cell
selection and sufficient RU allocation ensured by~\eqref{eq:coupling}, subject
to the capacity/resource limits in~\eqref{eq:limited}, and the fairness constraint
in~\eqref{eq:fairness}. The problem is formulated in
\eqref{eq:p0}.
\begin{equation}
\max_{\bm{\kappa},\bm{\alpha},\bm{\eta}} \norm{\bm{\eta}}_1\quad\textnormal{subject to }
\eqref{eq:kappa}-\eqref{eq:fairness}
\label{eq:p0}
\end{equation}

\section{Computational Complexity}
\label{sec:computation}

\begin{theorem}
The problem in~\eqref{eq:p0} is $\mathcal{NP}$-hard.
\label{thm:NP-hard}
\end{theorem}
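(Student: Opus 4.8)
The plan is to establish $\mathcal{NP}$-hardness by a polynomial-time reduction from a known $\mathcal{NP}$-complete problem. Given the combinatorial structure of the cell-selection variable $\bm{\kappa}$ together with the fronthaul capacity constraints in~\eqref{eq:limited}, the most natural candidate is a packing/partition-style problem. I would reduce from the classical \textsc{Partition} problem (or its generalization \textsc{3-Partition} / \textsc{Bin Packing}), because the constraint $\sum_{j\in\J_i}\int_{T}\eta_j t_j(\tau)\diff\tau \leq c_i$ is exactly a knapsack-type capacity restriction: each UE served by RRH $i$ consumes a chunk of the fronthaul budget $c_i$, and deciding which UEs to associate with which RRH so as to respect all the $c_i$ simultaneously is the combinatorial heart of the problem.

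First I would reduce the general problem to a restricted instance that isolates the cell-selection difficulty, stripping away the continuous resource-allocation and interference machinery. Concretely, I would consider a single-cluster (or otherwise degenerate) configuration in which the fairness constraint~\eqref{eq:fairness} forces all $\eta_j$ to be equal, the SINR terms $\gamma_j$ are fixed constants, and the interference load $\rho_k$ is held below $\bar{\rho}$ trivially, so that the only binding constraints are the fronthaul capacities. Under this specialization, maximizing $\norm{\bm{\eta}}_1$ subject to~\eqref{eq:coupling}--\eqref{eq:fairness} collapses to a feasibility question about whether a given set of demand volumes can be packed into the fronthaul capacities $c_i$ via the binary association matrix $\bm{\kappa}$.

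Next I would give the explicit encoding: from a \textsc{Partition} instance with item sizes $a_1,\dots,a_\ell$ summing to $2S$, I would build a C-RAN instance with two RRHs each of fronthaul capacity $S$, one UE per item whose scaled demand $\int_T \eta_j t_j\diff\tau$ equals $a_j$, and show that the target objective value (equivalently, the feasibility of a perfectly balanced association with all $\eta_j$ equal and positive) is achievable if and only if the items can be partitioned into two equal-capacity halves. The forward direction takes a \textsc{Partition} solution and reads off $\bm{\kappa}$; the reverse direction takes a feasible $\bm{\kappa}$ meeting both capacity bounds with equality and extracts the partition.

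The main obstacle I anticipate is not the reduction skeleton but verifying that the auxiliary constraints do not accidentally make the restricted instance either infeasible or trivially solvable: I must choose the channel gains $\bm{h}_{ij}$, powers $p_i$, RU count $M$, and bandwidth $B$ so that the rate functions $C_j$ and the coupling~\eqref{eq:coupling} are satisfied with the intended $\eta_j$ values, while ensuring~\eqref{eq:obligation} and the load bound $\rho_i \leq \bar{\rho}$ are compatible with the packing interpretation. Care is needed because $C_j$ depends on $\bm{\alpha}$ through $\bm{\rho}$, so the reduction must decouple this dependence — most cleanly by setting interference indicators so that $\gamma_j$ is a constant independent of the association, reducing~\eqref{eq:coupling} to a linear per-UE demand. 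Once this decoupling is argued, the equivalence between the decision version of~\eqref{eq:p0} and \textsc{Partition} follows, and since the reduction is clearly polynomial in the instance size, $\mathcal{NP}$-hardness of~\eqref{eq:p0} is established.
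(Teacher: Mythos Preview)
Your plan is sound and would yield a valid $\mathcal{NP}$-hardness proof, but it proceeds along a genuinely different axis from the paper's argument. The paper reduces from \textsc{3-SAT} and, crucially, sets the fronthaul capacities $c_i$ to be \emph{non-binding}: the hardness is driven entirely by the load constraint $\rho_i(\bm{\alpha},\bm{\kappa})\le\bar\rho$ via the interference coupling in~\eqref{eq:sinr}. Boolean variables are encoded by pairs of BSs, clauses by auxiliary BSs, and the gain values are engineered so that activating both BSs of a variable pair, or all three literal-BSs of a clause, overloads a designated cell; feasibility of the constructed instance is then equivalent to satisfiability. You do the opposite: you neutralise the interference (making $\gamma_j$ association-independent) and let the fronthaul capacities carry the combinatorics, reducing from \textsc{Partition}.

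What each approach buys: your reduction is conceptually simpler and isolates a single knapsack-type constraint, but as written it gives only weak $\mathcal{NP}$-hardness, since \textsc{Partition} is pseudo-polynomially solvable; to match the paper's strength you would need to carry out the \textsc{3-Partition} variant you allude to. The paper's \textsc{3-SAT} reduction yields strong $\mathcal{NP}$-hardness and, more tellingly, shows that the problem remains hard even with abundant fronthaul---the difficulty is intrinsic to the interference-coupled load model, not merely to capacity packing. Conversely, your route avoids the delicate gain-vector engineering (orthogonality conditions, specific power levels) the paper needs to align the load thresholds. One point you should make explicit when executing your construction: since CoMP lets a UE be served by both RRHs, consuming its demand on \emph{both} fronthauls, you must argue that this never improves the binding fronthaul bound, so an optimal $\bm{\kappa}$ assigns each UE to exactly one RRH and the equivalence with \textsc{Partition} goes through.
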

\begin{proof}
We prove the theorem by a polynomial-time reduction from the 3-satisfiability
(3-SAT) problem that is $\mathcal{NP}$-complete. Consider a 3-SAT problem with
$N_1$ Boolean variables $b_1$, $b_2$, $\ldots$, $b_{N_1}$, and $N_2$ clauses. A
Boolean variable or its negation is referred to as a literal, e.g.  $\hat{b}_i$
is the negation of $b_i$. A clause is composed by a disjunction of exactly three
distinct literals, e.g.  $(b_1\vee b_2 \vee\hat{b}_3)$ is an example of clause.
The 3-SAT problem amounts to determining whether or not there exists an
assignment of $\mathsf{true}$/$\mathsf{false}$ to the variables, such that all
clauses would be satisfied. To make the reduction from 3-SAT to our problem
in~\eqref{eq:p0}, we construct a specific network scenario as follows. Suppose
we have $N_1+N_2+1$ UEs in total, denoted by
$u_{0},u_{1},u_{2},\ldots,u_{N_1+N_2}$, respectively. Also, we have in total
$2N_1+N_2$ BSs, denoted by $a_1,a'_1,a_2,a'_2,\ldots a_{N_1},a'_{N_1}$, and
$A_{0}, A_{N_1+1},A_{N_1+2},\ldots,A_{N_1+N_2}$, respectively. For each $u_{i}$
($1\leq i\leq N_1$), we set $\R_{\ell(u_{i})}=\{a_i,a'_i\}$. For each $u_{j}$
($N_1<j\leq N_1+N_2)$, we set $\R_{\ell(u_{j})}=\{A_j\}$. Besides, $\R_{\ell(u_{0})}$ is set
to $\{A_0\}$. Let $p_{A_0}=N_1+1$. For $1\leq i\leq N_1$ and $N_1<j\leq
N_1+N_2$, let $p_{a_i}=p_{A_j}=1$.  For simplicity, we use the term
``\textit{gain value}'' to refer to $|\bm{h}^{\mathsf{H}}_{ij}\bm{w}_{i}|$ shown
in~\eqref{eq:sinr}.  For any UE $u_i$ $(1\leq i\leq N_1+N_2)$, the gain values of
$a_i$ and $a'_i$ equal to $1.0$. For UE $u_0$, the gain values of all $a_i$ and
$a'_i$ ($1\leq i\leq N_1$) equal to $1.0$. Besides, the vector
$\bm{h}^{H}_{a_i,u_i}\bm{w}_{a_i}$ is orthogonal to
$\bm{h}^{H}_{\hat{a}_i,u_i}\bm{w}_{\hat{a}_i}$ for $1\leq i\leq N_1$. For any $j$
($1\leq j\leq N_1$), $u_{N_1+j}$ has the gain value $1.0$ from the BSs that
represent the literals in clause $j$. The gain value from
$A_0$ to $u_0$ is $1.0$. In addition, from $A_{i}$ to $u_{i}$ ($N_1<i\leq
N_1+N_2)$), the gain value is $3.0$. Gain values between any other BS-UE
pair are negligible. The noise power $\sigma^2$ is $1.0$. In addition, the total
traffic demand within the time period $T$ for each UE is $1.0$. We normalize the data traffic
of UEs within the time period $T$ by $B\times M$, and the normalized demands are
uniformly set to $1.0$. The fronthaul capacity is set to
be sufficient with respect to this user demand. 

First, we note that each UE $j$ ($0\leq j\leq N_1+N_2+1$) should be served by at
least one BS, otherwise $C_j$ equals to $0$ and the
constraint~\eqref{eq:coupling} would be violated. Thus, $A_0$ is serving $u_0$
and $A_{N_1+1}, A_{N_1+2}, \ldots, A_{N_1+N_2}$ are serving
$u_{N_1+1},u_{N_1+2},\ldots,u_{N_1+N_2}$, respectively.  Second, it can be
verified that $u_{i}$ ($1\leq i\leq N_1$) can only be served by exactly
one BS in $\R_{\ell(u_{i})}=\{a_{i},\hat{a}_{i}\}$. This is because, if
$u_{i}$ is served by both $a_{i}$ and $\hat{a}_{i}$, then 
the BS $A_0$ would be overloaded ($\rho_{A_0}>1$) due to the interference
received from all other BSs, and thus the maximum resource limit constraint in
~\eqref{eq:limited} would be violated. Besides, for each clause, the three
corresponding cells (e.g.~for a clause
$(b_1\vee b_2\vee \hat{b}_3)$ the corresponding cells are $a_1$, $a_2$ and
$\hat{a}_3$) cannot be all active in serving UEs. Otherwise, the cell that is
serving the UE corresponding to this clause would be overloaded, thus violating
the maximum resource limit constraint in~\eqref{eq:limited}. 

Now suppose there
is an association that is feasible. For each Boolean variable $b_i$, we set
$b_i=\mathsf{true}$ if $a'_{i}$ is serving UE $u_{i}$\@. Otherwise, $u_{i}$ must
be served by $a_{i}$ and we set $b_i=\mathsf{false}$. For each clause, it is
satisfied only if at least one of its literals is with the value $\mathsf{true}$.
As discussed above, a feasible solution of the constructed problem cannot
have all the corresponding three BSs been in the status of serving UEs, which
means that we have at least one of the three BSs been in the idle status.
Therefore, a feasible solution of the constructed problem is corresponding to
the 3-SAT problem instance. Hence the conclusion. 
\end{proof}


\section{Solution Approach}
\label{sec:solution}

We show theoretical insights of the joint optimization problem on cell selection
and resource allocation in this section. We provide a scheme to achieve global
optimal BS resource allocation, with any fixed BS-UE association pattern.
Furthermore, we derive partial optimality condition for CoMP-cell selection. An 
algorithm for solving the problem in~\eqref{eq:p0} is then proposed, based
on the theoretical foundations mentioned above.

\subsection{Optimal Resource Allocation}

For clarity, we define the following notations, which are used throughout
the remaining of this paper.

\begin{notation}
    For any mapping
    $\mathpzc{M}:\mathbb{R}^q_{+}\rightarrow\mathbb{R}^{q}_{+}:\bm{\alpha}_{\geq
    0}\mapsto\mathpzc{M}(\bm{\alpha}_{\geq 0})$,
    denote $\mathpzc{M}^{(k+1)}=\mathpzc{M}^{(k)}\circ\mathpzc{M}$ for any
    $k\geq 1$. 
\end{notation}

\begin{notation}
    Denote the mapping
    $\mathpzc{T}_{\bm{\kappa},j}:\mathbb{R}_{+}^{q+1}\rightarrow\mathbb{R}_{+}:
    [\bm{\alpha},\eta_j]\mapsto\int_{T}\eta_{j}t_j(\tau)\big/C_j(\bm{\alpha},\bm{\kappa})\diff\tau$.
\end{notation}

\begin{notation}
    Denote the mapping
    $\mathpzc{H}_{\bm{\kappa}}:\mathbb{R}_{+}^{q+1}:
    \rightarrow\mathbb{R}^{q}_{+}:[\bm{\alpha},\eta_j]\rightarrow{[\mathpzc{T}_{\bm{\kappa},1}(\bm{\alpha},\eta_j),\mathpzc{T}_{\bm{\kappa},2}(\bm{\alpha},\eta_j),\ldots,\mathpzc{T}_{\bm{\kappa},q}(\bm{\alpha},\eta_j)]}^{\mathsf{T}}$.
\end{notation}

\begin{notation}
    Denote by $\norm{\bm{\alpha}_{\geq 0}}_{\bm{\kappa}}$ a norm of $\bm{\kappa}$ with
    respect to $\bm{\kappa}$, such that
    $\norm{\bm{\alpha}_{\geq
    0}}_{\bm{\kappa}}=\norm{\bm{\kappa}\bm{\alpha}_{\geq 0}}_{\infty}$,
    $\bm{\alpha}_{\geq 0}\in\mathbb{R}^{q}_{+}$.
\end{notation}

Theorem~\ref{thm:opt_alpha} provides solution for achieving the optimal resource
allocation under any BS-UE association $\bm{\kappa}'$. Before giving the proof,
we derive several lemmas, providing theoretical insights
of~\eqref{eq:opt_alpha},~\eqref{eq:alpha_kappa}, and~\eqref{eq:alpha_kappa2}.
Theorem~\ref{thm:opt_alpha} is then proved based on these lemmas.

\begin{theorem}
    [\small{Optimal Resource Allocation}] For any
    $\bm{\kappa}'\in{\{0,1\}}^{m\times q}$, consider the
mapping $\bm{\alpha}^{*}:{\{0,1\}}^{m\times
q}\rightarrow\mathbb{R}_{++}^{q}:\bm{\kappa}'\mapsto\argmax_{\bm{\alpha}}\{\norm{\bm{\eta}}_1:
\eqref{eq:coupling}-\eqref{eq:fairness}, \bm{\kappa}=\bm{\kappa}'\}$. Then 
\begin{equation}
    \alpha^{*}_j(\bm{\kappa}')=\min\left\{\alpha_j(\bm{\kappa}'),
    \hat{\alpha}_j(\bm{\kappa}')\right\} \quad j\in\J
\label{eq:opt_alpha}
\end{equation}
where 
\begin{equation}
    \alpha_j(\bm{\kappa}')=\lim_{k\rightarrow\infty}\bar{\rho}
    {{\mathpzc{T}^{(k)}_{\bm{\kappa}',j}
    (\bm{\alpha}_{\geq 0},1)}\bigg/\norm{\mathpzc{H}^{(k)}_{\bm{\kappa}'}
    (\bm{\alpha}_{\geq 0},1)}_{\bm{\kappa}'}}
\label{eq:alpha_kappa}
\end{equation}
and
\begin{equation}
    \hat{\alpha}_j(\bm{\kappa}')=\lim_{k\rightarrow\infty}{\mathpzc{T}^{(k)}_{\bm{\kappa}',j}
    (\bm{\alpha}_{\geq 0},\nu)}
\label{eq:alpha_kappa2}
\end{equation}
with 
$\nu=\min_{i\in\R}\left\{c_i\big/\int_{T}\sum_{j\in\J_i}t_j(\tau)\diff\tau\right\}$
and $\bm{\alpha}_{\geq 0}\in\mathbb{R}_{+}^{q}$.
\label{thm:opt_alpha}
\end{theorem}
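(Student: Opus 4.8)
The plan is to first collapse the fairness constraint: Jain's index $J(\bm{\eta})=1$ in~\eqref{eq:fairness} holds exactly when all coordinates of $\bm{\eta}$ are equal, so setting $\eta_1=\cdots=\eta_q=\eta$ turns the objective $\norm{\bm{\eta}}_1=q\eta$ into maximizing the single scalar $\eta$ under $\bm{\kappa}=\bm{\kappa}'$. For a fixed $\eta$, the coupling constraints~\eqref{eq:coupling} read $\bm{\alpha}\geq\mathpzc{H}_{\bm{\kappa}'}(\bm{\alpha},\eta)$ componentwise. I would argue that the resource-minimal allocation consistent with a target $\eta$ is attained when~\eqref{eq:coupling} holds with equality, i.e.\ at a fixed point $\bm{\alpha}=\mathpzc{H}_{\bm{\kappa}'}(\bm{\alpha},\eta)$: any slack in a coordinate $\alpha_j$ only inflates the loads $\rho_i=\sum_{l\in\J_i}\alpha_l$, tightening~\eqref{eq:limited} without raising $\eta$. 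Hence feasibility of $\eta$ is decided by whether this minimal fixed point satisfies~\eqref{eq:limited}, and the optimum is the largest feasible $\eta$.

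Next I would invoke the preparatory lemmas to establish that $\mathpzc{H}_{\bm{\kappa}'}(\cdot,\eta)$ is a standard interference function of $\bm{\alpha}$: positivity is immediate; monotonicity holds because raising any $\alpha_l$ raises the interfering loads $\rho_k$, lowers each $\gamma_j$ and the Shannon rate $C_j=MB\log(1+\gamma_j)$, and thus raises each $\mathpzc{T}_{\bm{\kappa}',j}$; scalability follows from the concavity of $\log(1+\cdot)$ together with the additive, unscaled noise term $\sigma^2$. By the associated fixed-point theory, the Picard iteration $\bm{\alpha}^{(k+1)}=\mathpzc{H}_{\bm{\kappa}'}(\bm{\alpha}^{(k)},\eta)$ converges from any start $\bm{\alpha}_{\geq 0}$ to the unique minimal fixed point $\bm{\alpha}(\eta)$ whenever $\eta$ is feasible, and both $\bm{\alpha}(\eta)$ and the induced loads $\rho_i(\bm{\alpha}(\eta))$ are nondecreasing in $\eta$. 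Equation~\eqref{eq:alpha_kappa2} is precisely this limit evaluated at $\eta=\nu$, so $[\hat{\alpha}_j(\bm{\kappa}')]=\bm{\alpha}(\nu)$.

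The two inequalities in~\eqref{eq:limited} then cap $\eta$ in two separate ways. Substituting $\eta_j=\eta$ into the fronthaul bound gives $\eta\int_T\sum_{j\in\J_i}t_j(\tau)\diff\tau\leq c_i$ for every $i$, i.e.\ $\eta\leq\nu$ with $\nu$ as defined, the allocation at this cap being $\hat{\alpha}_j(\bm{\kappa}')$. The load bound $\rho_i\leq\bar{\rho}$, written through the $\bm{\kappa}'$-weighted norm as $\norm{\bm{\alpha}(\eta)}_{\bm{\kappa}'}=\max_i\rho_i(\bm{\alpha}(\eta))\leq\bar{\rho}$, caps $\eta$ at the largest value $\eta_{\mathrm{res}}$ that drives the most-loaded BS to exactly $\bar{\rho}$; I would show the corresponding allocation is $\alpha_j(\bm{\kappa}')$ of~\eqref{eq:alpha_kappa}, where dividing by $\norm{\mathpzc{H}^{(k)}_{\bm{\kappa}'}(\bm{\alpha}_{\geq 0},1)}_{\bm{\kappa}'}$ and rescaling by $\bar{\rho}$ pins $\max_i\rho_i$ to $\bar{\rho}$ along the limiting direction of the iteration.

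Finally, since both caps must hold, the maximal feasible value is $\eta^{*}=\min\{\eta_{\mathrm{res}},\nu\}$, and $\bm{\alpha}^{*}=\bm{\alpha}(\eta^{*})$. By the monotonicity of $\bm{\alpha}(\cdot)$ established above, the two candidate vectors $\bm{\alpha}(\eta_{\mathrm{res}})=[\alpha_j(\bm{\kappa}')]$ and $\bm{\alpha}(\nu)=[\hat{\alpha}_j(\bm{\kappa}')]$ are componentwise ordered according to which of $\eta_{\mathrm{res}},\nu$ is smaller, so $\bm{\alpha}(\eta^{*})$ coincides with their componentwise minimum, yielding~\eqref{eq:opt_alpha}. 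The main obstacle I anticipate is twofold: rigorously verifying scalability of $\mathpzc{H}_{\bm{\kappa}'}$ in the presence of the logarithmic rate and the unscaled $\sigma^2$ (which is what legitimizes uniqueness, convergence, and monotonicity), and justifying that the renormalized $\eta=1$ iteration in~\eqref{eq:alpha_kappa} reproduces the fixed point at the critical $\eta_{\mathrm{res}}$ despite $\mathpzc{H}_{\bm{\kappa}'}$ being non-homogeneous in $\bm{\alpha}$; once these are in hand, the componentwise-minimum step is a short consequence of the monotone ordering.
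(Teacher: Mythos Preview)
Your proposal is correct and follows essentially the same route as the paper: collapse fairness to a common scalar $\eta$, use standard-interference-function fixed-point theory for the $\eta$-parametrized map $\mathpzc{H}_{\bm{\kappa}'}(\cdot,\eta)$, identify the two caps $\nu$ (fronthaul) and $\eta_{\mathrm{res}}$ (load $=\bar{\rho}$), and conclude via monotone ordering of the two fixed points that the optimum is their componentwise minimum. The paper dispatches your two anticipated obstacles by citing external results rather than arguing from scratch: the SIF properties and convergence of~\eqref{eq:alpha_kappa2} come from Yates' framework, and the convergence and $\norm{\cdot}_{\bm{\kappa}'}=\bar{\rho}$ property of the normalized iteration~\eqref{eq:alpha_kappa} come from Krause's concave Perron--Frobenius theorem, which handles exactly the non-homogeneous case you flagged.
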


\begin{lemma}
The limits in~\eqref{eq:alpha_kappa} and~\eqref{eq:alpha_kappa2} exist.
\label{lma:SIF}
\end{lemma}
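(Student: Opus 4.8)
The plan is to recognize both iterated maps as instances of Yates' \emph{standard interference function} (SIF) framework and to close the argument with the matching fixed-point convergence theorems. First I would note that, because $\eta_j$ is constant over $T$ and $C_j$ depends only on $\bm{\alpha}$, the map decouples multiplicatively as $\mathpzc{H}_{\bm{\kappa}'}(\bm{\alpha},\eta)=\eta\,F(\bm{\alpha})$, where $F\colon\mathbb{R}^{q}_{+}\to\mathbb{R}^{q}_{++}$ has components $F_j(\bm{\alpha})=\bar{t}_j/C_j(\bm{\alpha},\bm{\kappa}')$ with $\bar{t}_j=\int_T t_j(\tau)\diff\tau$. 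The core of the proof is then to verify that $F$ is a standard interference function: positive, monotone, and scalable.

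Positivity is immediate, since the numerator of $\gamma_j$ in~\eqref{eq:sinr} is a fixed positive constant and its denominator is finite, so $C_j>0$ and $F_j>0$. Monotonicity follows from the chain $\bm{\alpha}\le\bm{\alpha}'\Rightarrow\rho_k(\bm{\alpha})\le\rho_k(\bm{\alpha}')\Rightarrow\gamma_j(\bm{\alpha})\ge\gamma_j(\bm{\alpha}')\Rightarrow C_j(\bm{\alpha})\ge C_j(\bm{\alpha}')\Rightarrow F_j(\bm{\alpha})\le F_j(\bm{\alpha}')$, using that each load $\rho_k$ is a nonnegative linear form in $\bm{\alpha}$. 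Scalability, $F(s\bm{\alpha})<sF(\bm{\alpha})$ for $s>1$, is the one genuinely computational step and reduces to $sC_j(s\bm{\alpha})>C_j(\bm{\alpha})$. Writing $\gamma_j(\bm{\alpha})=G_j/(L_j(\bm{\alpha})+\sigma^2)$ with $L_j$ a nonnegative linear form and $L_j(\bm{0})=0$ gives $\gamma_j(s\bm{\alpha})>\gamma_j(\bm{\alpha})/s$, so it suffices to check $s\log\!\big(1+\gamma_j(\bm{\alpha})/s\big)\ge\log\!\big(1+\gamma_j(\bm{\alpha})\big)$; differentiating $s\mapsto s\log(1+x/s)$ and invoking $\log(1+x)>x/(1+x)$ for $x>0$ shows this map increases in $s$, which delivers the claim.

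With $F$ a SIF, the two limits follow from two distinct results. For~\eqref{eq:alpha_kappa2} the iteration is exactly $\bm{\alpha}^{(k+1)}=\nu F(\bm{\alpha}^{(k)})$, and $\nu F$ is again a SIF since the three properties survive multiplication by the positive scalar $\nu$; Yates' theorem then guarantees that, whenever a fixed point exists---the load-feasible regime selected by the fronthaul-derived $\nu$---it is unique and the iterate converges to it from any $\bm{\alpha}_{\ge 0}$, so $\hat{\alpha}_j(\bm{\kappa}')$ is well defined (the infeasible case being absorbed by the $\min$ in~\eqref{eq:opt_alpha}). For~\eqref{eq:alpha_kappa} the relevant object is the \emph{normalized} iterate $\bar{\rho}\,\mathpzc{H}^{(k)}_{\bm{\kappa}'}(\bm{\alpha}_{\ge 0},1)\big/\norm{\mathpzc{H}^{(k)}_{\bm{\kappa}'}(\bm{\alpha}_{\ge 0},1)}_{\bm{\kappa}'}$, where $\norm{\cdot}_{\bm{\kappa}'}$ is precisely the largest BS load $\max_{i}\rho_i$; its convergence is the nonlinear Perron--Frobenius statement for SIFs, the limit being the unique normalized eigenvector of $F$ rescaled so that the heaviest load equals $\bar{\rho}$.

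I expect this last convergence to be the main obstacle, because the un-normalized sequence $\mathpzc{H}^{(k)}_{\bm{\kappa}'}(\bm{\alpha}_{\ge 0},1)$ need not converge---it diverges whenever $\eta=1$ is load-infeasible---so Yates' theorem does not apply directly. I would pass to the projective setting, regarding $F$ as a monotone, strictly subhomogeneous self-map of the cone $\mathbb{R}^{q}_{++}$, and show that the induced map on rays contracts in Hilbert's projective metric (equivalently, invoke the eigenvalue/eigenvector theory for SIFs), so that directions converge and the $\norm{\cdot}_{\bm{\kappa}'}$-normalization pins a unique ray. A final point requiring care is that~\eqref{eq:alpha_kappa} normalizes the $k$-th iterate in one shot rather than renormalizing at each step; using the degree-one homogeneity of $\norm{\cdot}_{\bm{\kappa}'}$ I would argue that both schemes share the same ray-limit, after which the explicit $\bar{\rho}$ factor fixes the stated value.
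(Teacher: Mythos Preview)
Your proposal is correct and follows the same two-pronged strategy as the paper: Yates' SIF fixed-point theorem for~\eqref{eq:alpha_kappa2} and a nonlinear Perron--Frobenius argument for the normalized iterate in~\eqref{eq:alpha_kappa}. The only difference is packaging---the paper delegates the SIF verification to~\cite{You:2016uy} and handles the normalized convergence by a direct appeal to Theorem~1 of Krause~\cite{Krause:2001wd}, whereas you spell out the positivity/monotonicity/scalability checks and sketch the Hilbert-metric contraction that underlies Krause's result.
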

\begin{proof}
The proof for~\eqref{eq:alpha_kappa2} is based on the conclusions
that~\eqref{eq:alpha_kappa2} falls into the category of standard interference
function (SIF)~\cite{Yates:1995eh}, with respect to the variable
$\bm{\alpha}_{\geq 0}$, as proved in~\cite{You:2016uy}. An SIF converges to a
fixed point that is unique.  The proof for~\eqref{eq:alpha_kappa} is based on
Theorem~1 in~\cite{Krause:2001wd}, that the normalized mapping
${{\mathpzc{H}_{\bm{\kappa}'}(\bm{\alpha}_{\geq
0},1)}\big/\norm{\mathpzc{H}_{\bm{\kappa}'} (\bm{\alpha}_{\geq
0},1)}_{\bm{\kappa}'}}$ converges to a unique fixed point (and thus
${{\mathpzc{T}_{\bm{\kappa}',j}(\bm{\alpha}_{\geq
0},1)}\big/\norm{\mathpzc{H}_{\bm{\kappa}'} (\bm{\alpha}_{\geq
0},1)}_{\bm{\kappa}'}}$ for $j\in\J$). Following the conclusion
above, the fixed points of $\alpha_j(\bm{\kappa}')$ and
$\hat{\alpha}_{j}(\bm{\kappa}')$ are computed by~\eqref{eq:alpha_kappa}
and~\eqref{eq:alpha_kappa2}, respectively.
\end{proof}

\begin{lemma}
$\norm{\bm{\rho}(\bm{\alpha}(\bm{\kappa}'),\bm{\kappa}')}_{\infty}=\bar{\rho}$.
\label{lma:bar_rho}
\end{lemma}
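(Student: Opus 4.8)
$\|\bm{\rho}(\bm{\alpha}(\bm{\kappa}'), \bm{\kappa}')\|_\infty = \bar{\rho}$

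Let me understand the setup and notation first.The claim is that at the resource-allocation vector $\bm{\alpha}(\bm{\kappa}')$ defined by the limit in~\eqref{eq:alpha_kappa}, the maximally-loaded BS is loaded exactly to the resource ceiling $\bar{\rho}$, i.e. $\norm{\bm{\rho}(\bm{\alpha}(\bm{\kappa}'),\bm{\kappa}')}_{\infty}=\bar{\rho}$. Recall that $\rho_i(\bm{\alpha})=\sum_{j\in\J_i}\alpha_j$ is the load of BS $i$, so in the $\bm{\kappa}'$-indexed notation we have $\norm{\bm{\rho}(\bm{\alpha})}_{\infty}=\norm{\bm{\alpha}}_{\bm{\kappa}'}$. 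The plan is to show that this equals $\bar{\rho}$ by reading off the normalization built into~\eqref{eq:alpha_kappa}.

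First I would invoke Lemma~\ref{lma:SIF}, which guarantees that the limit in~\eqref{eq:alpha_kappa} exists and equals the unique fixed point of the normalized map ${\mathpzc{H}_{\bm{\kappa}'}(\bm{\alpha}_{\geq 0},1)}\big/\norm{\mathpzc{H}_{\bm{\kappa}'}(\bm{\alpha}_{\geq 0},1)}_{\bm{\kappa}'}$ scaled by $\bar{\rho}$. Writing $\bm{\alpha}(\bm{\kappa}')=\bar{\rho}\,\bm{\beta}^{*}$ where $\bm{\beta}^{*}$ is that normalized fixed point, the key structural fact is that any vector produced by the normalized map has unit $\norm{\cdot}_{\bm{\kappa}'}$-norm, since we have divided $\mathpzc{H}_{\bm{\kappa}'}$ by exactly $\norm{\mathpzc{H}_{\bm{\kappa}'}(\cdot)}_{\bm{\kappa}'}$. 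Taking the limit and using continuity of the norm, this forces $\norm{\bm{\beta}^{*}}_{\bm{\kappa}'}=1$.

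Then I would simply compute, using positive homogeneity of the norm,
\begin{equation}
\norm{\bm{\alpha}(\bm{\kappa}')}_{\bm{\kappa}'}=\norm{\bar{\rho}\,\bm{\beta}^{*}}_{\bm{\kappa}'}=\bar{\rho}\,\norm{\bm{\beta}^{*}}_{\bm{\kappa}'}=\bar{\rho},
\end{equation}
and finally translate back via $\norm{\bm{\rho}(\bm{\alpha}(\bm{\kappa}'),\bm{\kappa}')}_{\infty}=\norm{\bm{\kappa}'\bm{\alpha}(\bm{\kappa}')}_{\infty}=\norm{\bm{\alpha}(\bm{\kappa}')}_{\bm{\kappa}'}$ by the definition of the $\bm{\kappa}'$-norm in Notation~4. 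This chain delivers the claim.

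The main obstacle is justifying that the unit-norm property of the iterates actually passes to the limit and pins down the \emph{scaling} of the fixed point, rather than just its direction; concretely, one must argue that $\bm{\beta}^{*}$ is the particular representative with $\norm{\bm{\beta}^{*}}_{\bm{\kappa}'}=1$ among all positive multiples of the limiting ray. This follows because every term in the normalized sequence has $\norm{\cdot}_{\bm{\kappa}'}=1$ by construction and $\norm{\cdot}_{\bm{\kappa}'}$ is continuous, so the limit inherits the value $1$; the $\bar{\rho}$ prefactor in~\eqref{eq:alpha_kappa} then fixes the overall scale. A secondary subtlety worth a sentence is confirming $\mathpzc{H}_{\bm{\kappa}'}(\bm{\alpha}_{\geq 0},1)\neq \bm{0}$ throughout, so the normalization is well-defined; this holds because each $C_j$ is finite and the traffic demands $t_j$ are positive, keeping every $\mathpzc{T}_{\bm{\kappa}',j}$ strictly positive.
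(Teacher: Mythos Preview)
Your proposal is correct and follows essentially the same route as the paper's proof: both exploit the normalization in~\eqref{eq:alpha_kappa} to conclude that the unnormalized fixed point has $\norm{\cdot}_{\bm{\kappa}'}$-norm equal to $\bar{\rho}$, then identify this with $\norm{\bm{\rho}}_{\infty}$. The only cosmetic difference is that the paper appeals directly to Krause's theorem to assert $\norm{\mathsf{fixp}}_{\bm{\kappa}'}=1$ for the normalized fixed point, whereas you obtain the same fact from continuity of the norm on the unit-norm iterates; both justifications are valid and lead to the same computation.
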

\begin{proof}
Suppose $\mathsf{fixp}\in\mathbb{R}_{+}^{q}$ is the fixed point of
${{\mathpzc{T}_{\bm{\kappa}',j}
(\mathsf{var},1)}\big/\norm{\mathpzc{H}_{\bm{\kappa}'} (\mathsf{var},
1)}_{\bm{\kappa}'}}$, with respect to $\mathsf{var}$. By Theorem~1
in~\cite{Krause:2001wd}, there exists $\lambda>0$, such that
${\mathpzc{T}_{\bm{\kappa}',j} (\mathsf{fixp},1)}=\lambda\mathsf{fixp}$, with
$\norm{\mathsf{fixp}}_{\bm{\kappa}'}=1$. 
Thus, for the function ${{\mathpzc{T}_{\bm{\kappa}',j}
(\mathsf{var},1)}\big/\norm{\mathpzc{H}_{\bm{\kappa}'} (\mathsf{var},
1)\frac{1}{\bar{\rho}}}_{\bm{\kappa}'}}$, we have
$\norm{\frac{1}{\bar{\rho}}\bm{\alpha}(\bm{\kappa}')}_{\bm{\kappa}'}=1$ at its fixed point
$\bm{\alpha}(\bm{\kappa}')$, which leads to
$\norm{\bm{\kappa}'\bm{\alpha}(\bm{\kappa}')}=\bar{\rho}$. By the definition of
the mapping $\bar{\rho}$ in Section~\ref{sec:sys_model}, we have
$\norm{\bm{\rho}(\bm{\alpha}(\bm{\kappa}'),\bm{\kappa}')}_{\infty}=\bar{\rho}$.
Hence the conclusion.
\end{proof}

\begin{lemma}
    Either $\bm{\alpha}^{*}(\bm{\kappa}')=\bm{\alpha}(\bm{\kappa}')$ or 
    $\bm{\alpha}^{*}(\bm{\kappa}')=\hat{\bm{\alpha}}(\bm{\kappa}')$.
\label{lma:either}
\end{lemma}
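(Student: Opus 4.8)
The plan is to use the fairness constraint to reduce the problem to a single scalar, and then track how the \emph{minimal} feasible allocation moves as that scalar grows. First I would invoke~\eqref{eq:fairness}: since $J(\bm{\eta})=\norm{\bm{\eta}}_1^2/q\norm{\bm{\eta}}^2=1$ is exactly the equality case of Cauchy--Schwarz, it forces $\eta_1=\eta_2=\cdots=\eta_q=:\eta$. Maximizing $\norm{\bm{\eta}}_1=q\eta$ is therefore equivalent to maximizing the single scalar $\eta$, and the coupling constraint~\eqref{eq:coupling} collapses to the componentwise inequality $\bm{\alpha}\geq\mathpzc{H}_{\bm{\kappa}'}(\bm{\alpha},\eta)$.

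Next I would argue that, for each fixed $\eta$, the relevant allocation is the componentwise-minimal $\bm{\alpha}$ solving $\bm{\alpha}\geq\mathpzc{H}_{\bm{\kappa}'}(\bm{\alpha},\eta)$, which by Lemma~\ref{lma:SIF} is exactly the unique fixed point produced by iterating $\mathpzc{H}_{\bm{\kappa}'}(\cdot,\eta)$. Standard interference-function theory gives two facts I would lean on: this fixed point is the minimal solution of the inequality, and it is monotonically increasing in $\eta$. Denote it $\bm{\alpha}(\eta)$. A smaller allocation yields a smaller load $\rho_i=\sum_{j\in\J_i}\alpha_j$, so $\bm{\alpha}(\eta)$ is the candidate most likely to satisfy the resource cap $\rho_i\leq\bar{\rho}$ in~\eqref{eq:limited}; meanwhile the fronthaul constraint in~\eqref{eq:limited} does not involve $\bm{\alpha}$ at all and reduces, under $\eta_j\equiv\eta$, to the pure cap $\eta\leq\nu$ with $\nu$ as in the theorem. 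Hence only $\bm{\alpha}(\eta)$ needs to be examined.

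I would then identify the two distinct ways feasibility is lost as $\eta$ grows. By monotonicity $\norm{\bm{\rho}(\bm{\alpha}(\eta),\bm{\kappa}')}_{\infty}$ increases with $\eta$; let $\eta_{\alpha}$ be the value at which it first reaches $\bar{\rho}$. Lemma~\ref{lma:bar_rho} shows the resource-tight allocation is precisely $\bm{\alpha}(\bm{\kappa}')$ of~\eqref{eq:alpha_kappa}, so $\bm{\alpha}(\eta_{\alpha})=\bm{\alpha}(\bm{\kappa}')$. The fronthaul cap is lost exactly at $\eta=\nu$, where the minimal allocation is the fixed point of $\mathpzc{T}_{\bm{\kappa}',j}(\cdot,\nu)$, i.e.\ $\hat{\bm{\alpha}}(\bm{\kappa}')$ of~\eqref{eq:alpha_kappa2}. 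Any $\eta$ exceeding $\min\{\eta_{\alpha},\nu\}$ makes even the minimal allocation violate one of the two constraints and is infeasible, so the optimum is attained at $\eta^{*}=\min\{\eta_{\alpha},\nu\}$. If $\eta_{\alpha}\leq\nu$ the resource cap binds and $\bm{\alpha}^{*}(\bm{\kappa}')=\bm{\alpha}(\eta^{*})=\bm{\alpha}(\bm{\kappa}')$; otherwise the fronthaul binds and $\bm{\alpha}^{*}(\bm{\kappa}')=\hat{\bm{\alpha}}(\bm{\kappa}')$, which is the claim.

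The main obstacle I expect is the middle step: rigorously justifying that the interference-function fixed point is simultaneously the minimal feasible allocation and monotone in $\eta$, and then tying the abstract normalized iteration~\eqref{eq:alpha_kappa} to the operational statement ``the $\eta$ at which the resource cap is exactly tight.'' Monotonicity and minimality rest on the standardness of $\mathpzc{H}_{\bm{\kappa}'}$, while the identification of $\bm{\alpha}(\bm{\kappa}')$ with the resource-tight allocation is exactly what Lemma~\ref{lma:bar_rho} supplies. The subtle point is ruling out a \emph{mixed} optimum, with some components drawn from $\bm{\alpha}$ and others from $\hat{\bm{\alpha}}$: this is impossible because a single scalar $\eta$ governs the entire vector, so the optimal allocation is the one minimal fixed point at $\eta^{*}$, which coincides as a whole vector with whichever of the two critical allocations corresponds to the first binding constraint.
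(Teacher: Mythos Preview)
Your argument is correct and rests on the same engine as the paper's---monotonicity of the SIF fixed point of $\mathpzc{H}_{\bm{\kappa}'}(\cdot,\eta)$ in the scalar $\eta$---but the organization is genuinely different. The paper does not touch the optimization problem at all in this lemma: it simply observes that $\bm{\alpha}(\bm{\kappa}')$ is itself the fixed point of $\mathpzc{T}_{\bm{\kappa}',j}(\cdot,\lambda)$ for the Krause eigenvalue $\lambda$, and then compares $\lambda$ with $\nu$. Starting the iteration~\eqref{eq:alpha_kappa2} from $\bm{\alpha}(\bm{\kappa}')$ and using monotonicity of $\mathpzc{T}$ in both arguments, it shows $\hat{\bm{\alpha}}(\bm{\kappa}')\geq\bm{\alpha}(\bm{\kappa}')$ when $\nu>\lambda$ and the reverse when $\nu<\lambda$; comparability of the two vectors is the whole content, and the lemma follows because the componentwise minimum in~\eqref{eq:opt_alpha} then coincides with one of them. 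Your route instead parametrizes the minimal feasible allocation by $\eta$, identifies $\bm{\alpha}(\bm{\kappa}')$ and $\hat{\bm{\alpha}}(\bm{\kappa}')$ as the values of $\bm{\alpha}(\eta)$ at the two critical scalars $\eta_{\alpha}$ and $\nu$, and lets monotonicity in $\eta$ do the work---your $\eta_{\alpha}$ is exactly the paper's $\lambda$. What your framing buys is that it simultaneously delivers the content of Lemma~\ref{lma:eta} and most of Theorem~\ref{thm:opt_alpha}; what the paper's framing buys is a cleaner separation of concerns, since it never needs to argue that the argmax is attained at the minimal fixed point. One small caution: to claim $\bm{\alpha}(\eta_{\alpha})=\bm{\alpha}(\bm{\kappa}')$ you need not only Lemma~\ref{lma:bar_rho} (which gives $\norm{\bm{\rho}}_{\infty}=\bar{\rho}$) but also the Krause-type statement that the normalized limit~\eqref{eq:alpha_kappa} is a genuine fixed point of $\mathpzc{H}_{\bm{\kappa}'}(\cdot,\lambda)$ for some $\lambda$; the paper extracts this from~\cite{Krause:2001wd} inside its proof, so be sure to cite it explicitly.
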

\begin{proof}
Suppose $\mathsf{fixp}\in\mathbb{R}_{+}^{q}$ is the fixed point of
${{\mathpzc{T}_{\bm{\kappa}',j}
(\mathsf{var},1)}\big/\norm{\mathpzc{H}_{\bm{\kappa}'} (\mathsf{var},
1)}_{\bm{\kappa}'}}$, with respect to $\mathsf{var}$.  By Theorem~1
in~\cite{Krause:2001wd}), there exists $\lambda>0$, such that
${\mathpzc{T}_{\bm{\kappa}',j} (\mathsf{fixp},1)}=\lambda\mathsf{fixp}$, with
$\norm{\frac{1}{\bar{\rho}}\mathsf{fixp}}_{\bm{\kappa}'}=1$.  We then conclude
that, if in the limit in~\eqref{eq:alpha_kappa2} we have exactly $\nu=\lambda$,
then $\hat{\bm{\alpha}}(\bm{\kappa}')=\bm{\alpha}(\bm{\kappa}')$.  For clarity
in the following proof, we denote this specific
$\hat{\bm{\alpha}}(\bm{\kappa}')$ by $\hat{\bm{\alpha}}(\bm{\kappa}',\lambda)$.
Now consider any $\nu$ with $\nu>\lambda$. We look into the corresponding
sequence of the limit in~\eqref{eq:alpha_kappa2}. For any $k\geq 1$ and
$j\in\J$, denote $\alpha_j^{(k)}=\mathpzc{T}_{\bm{\kappa}',j}
(\bm{\alpha}^{(k-1)},\nu)$, with
$\bm{\alpha}^{(0)}=\bm{\alpha}(\bm{\kappa}',\lambda)$. Note that
$\alpha^{(0)}_{j}=\mathpzc{T}_{\bm{\kappa}',j} (\bm{\alpha}^{(0)},\lambda)$
for any $j\in\J$.  By our construction, the sequence
$\bm{\alpha}^{(0)},\bm{\alpha}^{(1)},\ldots,\bm{\alpha}^{(\infty)}$ converges
to $\bm{\alpha}(\bm{\kappa}')$, i.e.
$\bm{\alpha}^{(\infty)}=\bm{\alpha}(\bm{\kappa}')$. By our definition of the
sequence above, $\alpha_j^{(1)}=\mathpzc{T}_{\bm{\kappa}',j}
(\bm{\alpha}^{(0)},\nu)$. Meanwhile, note that for any $j\in\J$,
$\mathpzc{T}_{\bm{\kappa}',j}(\bm{\alpha}^{(0)},\nu)\geq
\mathpzc{T}_{\bm{\kappa}',j}
(\bm{\alpha}^{(0)},\lambda)$ holds, due to
that $\mathpzc{T}_{\bm{\kappa}',j} (\bm{\alpha}^{(0)},\nu)$ is monotonic in
$\nu$ (or $\lambda$). Therefore, $\bm{\alpha}^{(1)}\geq\bm{\alpha}^{(0)}$.  By using the same
way, one can prove that $\bm{\alpha}^{(\infty)}\geq\bm{\alpha}^{(0)}$, and hence
$\hat{\bm{\alpha}}(\bm{\kappa}')\geq\bm{\alpha}(\bm{\kappa}')$.  Similarly, for
any $\nu<\lambda$, we can prove that
$\hat{\bm{\alpha}}(\bm{\kappa}')\leq\bm{\alpha}(\bm{\kappa}')$ holds.  Hence the
conclusion.
\end{proof}

\begin{lemma}
Denote  
$\bm{\eta}^{*}(\bm{\kappa}')=\argmax_{\bm{\eta}}\{\norm{\bm{\eta}}_1:
\eqref{eq:coupling},\eqref{eq:limited},\bm{\alpha}=
\bm{\alpha}^{*}(\bm{\kappa}'),\bm{\kappa}=\bm{\kappa}'\}$. 
Then $\eta^{*}_{1}(\bm{\kappa}')=\eta_{2}^{*}(\bm{\kappa}')=\cdots=\eta_{q}^{*}(\bm{\kappa}')$.
\label{lma:eta}
\end{lemma}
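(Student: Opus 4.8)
The plan is to reduce the lemma to a simple linear program in $\bm{\eta}$ and to show that its optimum is forced to be the all-equal vector. First I would fix $\bm{\alpha}=\bm{\alpha}^{*}(\bm{\kappa}')$ and $\bm{\kappa}=\bm{\kappa}'$, so that each $C_j(\bm{\alpha}^{*},\bm{\kappa}')$ becomes a constant independent of $\bm{\eta}$. Then the coupling constraint~\eqref{eq:coupling} decouples into the per-UE bounds $\eta_j\le\bar\eta_j:=\alpha^{*}_j(\bm{\kappa}')\,C_j(\bm{\alpha}^{*},\bm{\kappa}')\big/\int_{T}t_j(\tau)\diff\tau$, while~\eqref{eq:limited} contributes the per-RRH fronthaul bounds $\sum_{j\in\J_i}\eta_j\int_{T}t_j(\tau)\diff\tau\le c_i$ (its load part $\rho_i\le\bar\rho$ involves only $\bm{\alpha}^{*}$ and is already satisfied). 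Maximizing $\norm{\bm{\eta}}_1=\sum_j\eta_j$ over these linear constraints is the problem I intend to analyze.

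The crux is to show that the caps $\bar\eta_j$ are in fact the \emph{same} constant for every $j$; this is the only place that genuinely uses the fixed-point structure of $\bm{\alpha}^{*}(\bm{\kappa}')$. By Lemma~\ref{lma:either} the minimum in~\eqref{eq:opt_alpha} is attained on one and the same branch for all coordinates, so either $\bm{\alpha}^{*}=\hat{\bm{\alpha}}$ or $\bm{\alpha}^{*}=\bm{\alpha}$. In the first case the fixed-point identity $\hat\alpha_j=\mathpzc{T}_{\bm{\kappa}',j}(\hat{\bm{\alpha}},\nu)=\nu\int_{T}t_j(\tau)\diff\tau/C_j(\hat{\bm{\alpha}},\bm{\kappa}')$ yields $\bar\eta_j=\nu$ for all $j$. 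In the second case I would invoke the identity $\hat{\bm{\alpha}}(\bm{\kappa}',\lambda)=\bm{\alpha}(\bm{\kappa}')$ established in the proof of Lemma~\ref{lma:either}: reading $\bm{\alpha}(\bm{\kappa}')$ as the fixed point of $\mathpzc{T}_{\bm{\kappa}',j}(\cdot,\lambda)$ gives $\alpha_j=\lambda\int_{T}t_j(\tau)\diff\tau/C_j(\bm{\alpha},\bm{\kappa}')$, hence $\bar\eta_j=\lambda$ for all $j$. Either way the coupling cap is a uniform value $\bar\eta$ (equal to $\nu$ or to $\lambda$).

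Next I would check that the uniform choice $\eta_j=\bar\eta$ for all $j$ is feasible, i.e.\ respects the fronthaul constraint. For the $\hat{\bm{\alpha}}$-branch this is immediate from $\nu=\min_{i\in\R}\{c_i/\int_T\sum_{j\in\J_i}t_j(\tau)\diff\tau\}$, which gives $\nu\int_T\sum_{j\in\J_i}t_j(\tau)\diff\tau\le c_i$ for every $i$. For the $\bm{\alpha}$-branch, Lemma~\ref{lma:either} places this case in the regime $\nu\ge\lambda$, so $\lambda\le c_i/\int_T\sum_{j\in\J_i}t_j(\tau)\diff\tau$ and the same computation shows feasibility. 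Thus the uniform vector lies in the feasible set.

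Finally, since the coupling constraint forces $\eta_j\le\bar\eta$ for every $j$, we always have $\norm{\bm{\eta}}_1\le q\bar\eta$, a bound attained by setting all $\eta_j=\bar\eta$ (feasible by the previous step); moreover $\sum_j\eta_j=q\bar\eta$ together with $\eta_j\le\bar\eta$ forces $\eta_j=\bar\eta$ for every $j$. Hence the maximizer $\bm{\eta}^{*}(\bm{\kappa}')$ is the unique uniform vector with all entries equal to $\bar\eta$, giving $\eta^{*}_1(\bm{\kappa}')=\cdots=\eta^{*}_q(\bm{\kappa}')$. The main obstacle I anticipate is the second step—extracting the uniformity of the caps $\bar\eta_j$ from the nonlinear Perron-type eigenrelation of the fixed points—since everything afterward is elementary LP reasoning.
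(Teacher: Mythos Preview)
Your proposal is correct and mirrors the paper's own proof almost exactly: both split into the two branches of Lemma~\ref{lma:either}, identify the uniform cap as $\nu$ (for the $\hat{\bm{\alpha}}$-branch) or $\lambda$ (for the $\bm{\alpha}$-branch) via the respective fixed-point identities, verify feasibility of the uniform vector against the fronthaul and load constraints (the latter through Lemma~\ref{lma:bar_rho}), and conclude from the upper bound $\norm{\bm{\eta}}_1\le q\bar\eta$. Your explicit framing of the residual problem as a linear program in $\bm{\eta}$ is a clean presentational device, but the substance is identical to the paper's argument.
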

\begin{proof}
We first note that the feasibility holds regarding the constraints in the
optimization problem stated in this lemma, by setting
$\eta_1=\eta_2=\cdots=\eta_q=0$.  The we prove that at its optimum we have
$\eta_1=\eta_2=\cdots=\eta_q$.  By Lemma~\ref{lma:either}, we either have
$\bm{\alpha}^{*}(\bm{\kappa}')=\bm{\alpha}(\bm{\kappa}')$ or
$\bm{\alpha}^{*}(\bm{\kappa}')=\hat{\bm{\alpha}}(\bm{\kappa}')$.  We first
consider the case of
$\bm{\alpha}^{*}(\bm{\kappa}')=\hat{\bm{\alpha}}(\bm{\kappa}')$. In this case,
we have for any $j\in\J$ that
$\eta_j\leq\hat{\alpha}_{j}(\bm{\kappa}')C_j(\hat{\bm{\alpha}}(\bm{\kappa}'),
\bm{\kappa}')\big/\int_{T}t_j(\tau)\diff\tau$, by the constraint~\eqref{eq:coupling}.
Combined with~\eqref{eq:alpha_kappa2} that
$\nu=\hat{\alpha}_{j}(\bm{\kappa}')C_j(\hat{\bm{\alpha}}(\bm{\kappa}'),\bm{\kappa}')\big/
\int_{T}t_j(\tau)\diff\tau$
($j\in\J$), we then have
$\eta_j\leq\nu=\min_{i\in\R}\{c_i\big/\int_{T}\sum_{j\in\J_i}t_j(\tau)\diff\tau\}$
for any $j\in\J$. Also, note that $\eta_j\leq\nu$ with all $j\in\J$ indicates
that the fronthaul capacity constraint in~\eqref{eq:limited} is satisfied. In
addition, by the definition of $\bm{\alpha}^{*}(\bm{\kappa}')$
in~\eqref{eq:opt_alpha}, we have
$\bm{\alpha}^{*}(\bm{\kappa}')\leq\bm{\alpha}(\bm{\kappa}')$, which leads to
$\bm{\rho}(\bm{\alpha}^{*}(\bm{\kappa}'),\bm{\kappa}')\leq
\bm{\rho}(\bm{\alpha}(\bm{\kappa}'),\bm{\kappa}')$. Along with
Lemma~\ref{lma:bar_rho}, we conclude that
$\rho_{i}(\bm{\alpha}^{*}(\bm{\kappa}'),\bm{\kappa}')\leq \bar{\rho}$ holds for
any $i\in\R$, and hence the maximum resource limit constraint
in~\eqref{eq:limited} is satisfied. In this case, we should have
$\eta_1=\eta_2=\cdots=\eta_q=\nu$ so as to reach the maximum of
$\norm{\bm{\eta}}_1$.

For the other case that
$\bm{\alpha}^{*}(\bm{\kappa}')=\bm{\alpha}(\bm{\kappa}')$, we can verify that the
maximum resource limit constraint in~\eqref{eq:limited} is satisfied, 
by applying Lemma~\ref{lma:bar_rho} directly. Also, from the proof of
Lemma~\ref{lma:eta}, we have some $\lambda$ such that 
$\lambda=\alpha_1(\bm{\kappa}')/
\mathpzc{T}_{\bm{\kappa}',1}(\bm{\alpha}(\bm{\kappa}'),1)=\alpha_2(\bm{\kappa}')/
\mathpzc{T}_{\bm{\kappa}',2}(\bm{\alpha}(\bm{\kappa}'),1)=\cdots=\alpha_q(\bm{\kappa}')/
\mathpzc{T}_{\bm{\kappa}',q}(\bm{\alpha}(\bm{\kappa}'),1)$. By the proof in
Lemma~\ref{lma:either}, we know that in this case we have $\lambda<\nu$ holds,
and thus the fronthaul capacity constraint in~\eqref{eq:limited} is satisfied. 
By the constraint in~\eqref{eq:coupling}, we have
$\eta_j\leq{\alpha}_{j}(\bm{\kappa}')C_j({\bm{\alpha}}(\bm{\kappa}'),
\bm{\kappa}')\big/\int_{T}t_j(\tau)\diff\tau=\lambda$. We therefore have
$\eta_1=\eta_2=\cdots=\eta_{q}=\lambda$ so as to reach the maximum of
$\norm{\bm{\eta}}_{1}$.
\end{proof}




\begin{lemma}
For any $j\in\J$ in Lemma~\ref{lma:eta},~\eqref{eq:eta} holds. 
\begin{equation}
    \eta^{*}_{j}(\bm{\kappa}')=\min\left\{\alpha_j^{*}(\bm{\kappa}')\bigg/
    \mathpzc{T}_{\bm{\kappa}',j}(\bm{\alpha}^{*}(\bm{\kappa}'),1),~
\nu \right\}
\label{eq:eta}
\end{equation}
\end{lemma}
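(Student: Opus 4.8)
The plan is to read off~\eqref{eq:eta} as a consolidation of the two cases already separated in the proof of Lemma~\ref{lma:eta}, attaching each term inside the minimum to one of the two active constraints. First I would derive the ``$\leq$'' direction. Because $\mathpzc{T}_{\bm{\kappa}',j}$ is linear in its last argument, evaluating the coupling constraint~\eqref{eq:coupling} at $\bm{\alpha}=\bm{\alpha}^{*}(\bm{\kappa}')$ gives $\alpha^{*}_j(\bm{\kappa}')\geq \eta_j\,\mathpzc{T}_{\bm{\kappa}',j}(\bm{\alpha}^{*}(\bm{\kappa}'),1)$, hence $\eta_j\leq \alpha^{*}_j(\bm{\kappa}')/\mathpzc{T}_{\bm{\kappa}',j}(\bm{\alpha}^{*}(\bm{\kappa}'),1)$; separately, the fronthaul half of~\eqref{eq:limited} together with the definition of $\nu$ forces $\eta_j\leq\nu$. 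Taking the smaller of the two yields the claimed upper bound on $\eta^{*}_j(\bm{\kappa}')$.

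For the matching value I would invoke Lemma~\ref{lma:either}, which asserts that $\bm{\alpha}^{*}(\bm{\kappa}')$ coincides \emph{as a whole vector} with either $\bm{\alpha}(\bm{\kappa}')$ or $\hat{\bm{\alpha}}(\bm{\kappa}')$, and then evaluate the quotient $\alpha^{*}_j(\bm{\kappa}')/\mathpzc{T}_{\bm{\kappa}',j}(\bm{\alpha}^{*}(\bm{\kappa}'),1)$ in each case. If $\bm{\alpha}^{*}(\bm{\kappa}')=\hat{\bm{\alpha}}(\bm{\kappa}')$, the fixed-point identity $\hat{\alpha}_j(\bm{\kappa}')=\nu\,\mathpzc{T}_{\bm{\kappa}',j}(\hat{\bm{\alpha}}(\bm{\kappa}'),1)$, read off from~\eqref{eq:alpha_kappa2} via linearity in the last argument, makes the quotient equal to $\nu$, so the minimum is $\nu$, which is exactly $\eta^{*}_j(\bm{\kappa}')$ in the corresponding case of Lemma~\ref{lma:eta}. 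If instead $\bm{\alpha}^{*}(\bm{\kappa}')=\bm{\alpha}(\bm{\kappa}')$, the $j$-independent eigenvalue relation $\alpha_j(\bm{\kappa}')=\lambda\,\mathpzc{T}_{\bm{\kappa}',j}(\bm{\alpha}(\bm{\kappa}'),1)$ used in Lemma~\ref{lma:eta} makes the quotient equal to $\lambda$, and since this case corresponds to $\lambda<\nu$ in the proof of Lemma~\ref{lma:either}, the minimum is $\lambda=\eta^{*}_j(\bm{\kappa}')$. Either way the right-hand side of~\eqref{eq:eta} reproduces the optimal value supplied by Lemma~\ref{lma:eta}.

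I expect the main obstacle to be the nonlinear dependence of $\mathpzc{T}_{\bm{\kappa}',j}$ on $\bm{\alpha}$ through $C_j$, which forbids distributing the componentwise minimum of~\eqref{eq:opt_alpha} across the quotient: one cannot a priori write $\alpha^{*}_j/\mathpzc{T}_{\bm{\kappa}',j}(\bm{\alpha}^{*},1)$ as $\min\{\alpha_j/\mathpzc{T}_{\bm{\kappa}',j}(\bm{\alpha},1),\,\hat{\alpha}_j/\mathpzc{T}_{\bm{\kappa}',j}(\hat{\bm{\alpha}},1)\}$. Lemma~\ref{lma:either} is what removes this difficulty, since it certifies that $\bm{\alpha}^{*}$ is globally one fixed point or the other rather than a componentwise blend; this collapses the quotient uniformly to $\lambda$ or $\nu$ and simultaneously explains why the per-user formula~\eqref{eq:eta} is consistent with the all-equal conclusion $\eta^{*}_1(\bm{\kappa}')=\cdots=\eta^{*}_q(\bm{\kappa}')$ of Lemma~\ref{lma:eta}.
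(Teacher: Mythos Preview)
Your proposal is correct and follows essentially the same route as the paper: invoke Lemma~\ref{lma:either} to reduce to the two cases $\bm{\alpha}^{*}(\bm{\kappa}')=\bm{\alpha}(\bm{\kappa}')$ and $\bm{\alpha}^{*}(\bm{\kappa}')=\hat{\bm{\alpha}}(\bm{\kappa}')$, then read off the quotient $\alpha^{*}_j/\mathpzc{T}_{\bm{\kappa}',j}(\bm{\alpha}^{*},1)$ as $\lambda$ or $\nu$ from the respective fixed-point identities and match with the optimal values $\eta^{*}_j=\lambda$ or $\eta^{*}_j=\nu$ established in Lemma~\ref{lma:eta}. Your write-up is more explicit than the paper's (which simply cites the two lemmas and the chain $\lambda=\alpha_j/\mathpzc{T}_{\bm{\kappa}',j}(\bm{\alpha},1)$), and your observation that Lemma~\ref{lma:either} is precisely what prevents the componentwise minimum in~\eqref{eq:opt_alpha} from causing trouble is a useful clarification the paper leaves implicit.
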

\begin{proof}
The proof is directly based on Lemma~\ref{lma:either} and~\ref{lma:eta} and
their corresponding proofs. For the two cases
$\bm{\alpha}^{*}(\bm{\kappa}')=\bm{\alpha}(\bm{\kappa}')$ and
$\bm{\alpha}^{*}(\bm{\kappa}')=\hat{\bm{\alpha}}(\bm{\kappa}')$, note that we
have $\eta_j=\min\{\lambda,\nu\}$ respectively, for all $j\in\J$. Since
$\lambda=\alpha_1(\bm{\kappa}')/
\mathpzc{T}_{\bm{\kappa}',1}(\bm{\alpha}(\bm{\kappa}'),1)=\alpha_2(\bm{\kappa}')/
\mathpzc{T}_{\bm{\kappa}',2}(\bm{\alpha}(\bm{\kappa}'),1)=\cdots=\alpha_q(\bm{\kappa}')/
\mathpzc{T}_{\bm{\kappa}',q}(\bm{\alpha}(\bm{\kappa}'),1)$, we reach the
conclusion.
\end{proof}

\subsection*{\textbf{The proof of Theorem~\ref{thm:opt_alpha}} is as follows:}
\begin{proof}
For the problem $P1:\argmax_{\bm{\alpha}}\{\norm{\bm{\eta}}_1:
\eqref{eq:coupling}-\eqref{eq:fairness}, \bm{\kappa}=\bm{\kappa}'\}$, we can
prove by~Lemma~\ref{lma:eta} that $\bm{\alpha}=\bm{\alpha}^{*}(\bm{\kappa}')$
satisfies the constraint~\eqref{eq:fairness}, due to that
$\eta^{*}_{1}(\bm{\kappa}')=\eta_{2}^{*}(\bm{\kappa}')=\cdots=\eta_{q}^{*}(\bm{\kappa}')$.
Also, $\bm{\alpha}^{*}(\bm{\kappa}')$ is the optimal solution for the problem 
$P2:\argmax_{\bm{\eta}}\{\norm{\bm{\eta}}_1:
\eqref{eq:coupling},\eqref{eq:limited},\bm{\alpha}=
\bm{\alpha}^{*}(\bm{\kappa}'),\bm{\kappa}=\bm{\kappa}'\}$, according
to~Lemma~\ref{lma:eta}, which means that $\bm{\alpha}(\bm{\kappa}')$ satisfies
the constraints~\eqref{eq:coupling} and~\eqref{eq:limited}. Therefore,
$\bm{\alpha}(\bm{\kappa}')$ is a feasible solution of $P1$. Consider there are
another feasible solution of $P1$, $\langle \bm{\eta}', \bm{\alpha}' \rangle$
that leads to an objective
value $\norm{\bm{\eta}'}>\norm{\bm{\eta}^{*}}$. To meet the constraint
in~\eqref{eq:fairness}, we should have $\eta'_{1}=\eta'_{2}=\cdots\eta'_{q}$.
Therefore, we have either $\eta'_j=\beta\lambda$ or $\eta'_j=\beta\nu$ ($\beta>1$,
$j\in\J)$. 
In addition, under the constraints~\eqref{eq:coupling} and~\eqref{eq:limited}, we should
respectively have $\eta'_j\leq{\alpha}'_{j}C_j(\bm{\alpha}',
\bm{\kappa}')\big/\int_{T}t_j(\tau)\diff\tau$ and
$\eta_j\leq\min_{i\in\R}\{c_i\big/\int_{T}\sum_{j\in\J_i}t_j(\tau)\diff\tau\}$.
It can then be verified that the two constraints cannot be satisfied
together by $\bm{\eta}'$, which conflicts our assumption that $\langle
\bm{\eta}',\bm{\alpha}'\rangle$ is feasible. Hence the conclusion.
\end{proof}

\subsection{CoMP-cell Selection}

The partial optimality condition of CoMP-cell selection is given by
Theorem~\ref{thm:opt_condition}, which is proved based on Lemma~\ref{lma:H1} and
Lemma~\ref{lma:H2}.

\begin{definition}
For any $j\in\J$ and $\mathpzc{r}\subseteq\R$, define the mapping $\mathpzc{E}_{j,g}:{\{0,1\}}^{m\times q}\rightarrow{\{0,1\}}^{m\times q}:
\bm{\kappa}\rightarrow\hat{\bm{\kappa}}|\I_{j}(\hat{\bm{\kappa}})
=\I_{j}(\bm{\kappa})\cup\mathpzc{r}$.
\end{definition}

\begin{definition}[\small{CoMP-cell filter}] 
    Denote $\bm{\kappa}''=\mathpzc{E}_{j,\{i\}}(\bm{\kappa}')$.
    For any UE $j$ and any target set
    $\mathpzc{r}\subseteq\R$, define the mapping
    ${\{0,1\}}^{m\times q}\rightarrow{\{0,1\}}^{m\times q}$ as the filter of
    $\mathpzc{r}$
\begin{equation}
    \mathpzc{F}_{j,\{i\}}:\bm{\kappa}'\mapsto
    \left\{\begin{array}{ll}
        \bm{\kappa}'' & \mathsf{condition}
        \textnormal{ satisfied} \\
        \bm{\kappa}' & \textnormal{otherwise}
    \end{array}\right.
\label{eq:funcF}
\end{equation}
such that the $\mathsf{condition}$ is
$\rho_i(\mathpzc{H}_{\bm{\kappa}''}(\bm{\alpha}^{*}(\bm{\kappa}'),\eta^{*}_j),\bm{\kappa}'')\leq\rho_i(\bm{\alpha}^{*}(\bm{\kappa}'),\bm{\kappa}'')$,
where the parameter $\mu_i$ is defined as
$\mu=\mathpzc{T}_{\bm{\kappa}'',j}
    (\bm{\alpha}^{*}(\bm{\kappa}'),\eta^{*}_j)
    \big/\alpha^{*}_{j}(\bm{\kappa'})$
and the resource allocation $\bm{\alpha}(\bm{\kappa}')$ follows the definition in~\eqref{eq:opt_alpha}.
\end{definition}

\begin{lemma}
    $\mathpzc{H}_{\bm{\kappa}''}(\bm{\alpha}^{*}(\bm{\kappa}'),\eta^{*}_j(\bm{\kappa}'))\leq\bm{\alpha}^{*}(\bm{\kappa}')$.
\label{lma:H1}
\end{lemma}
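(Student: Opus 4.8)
The plan is to prove the vector inequality component by component and to convert each component into a statement about link capacities. First I would record a homogeneity fact: since $\eta_j$ enters $\mathpzc{T}_{\bm{\kappa},j}$ only as the linear multiplier of the traffic integral, $\mathpzc{T}_{\bm{\kappa},j'}(\bm{\alpha},\eta)=\eta\,\mathpzc{T}_{\bm{\kappa},j'}(\bm{\alpha},1)$. Next I would invoke the characterization of the optimum from Lemma~\ref{lma:eta} and~\eqref{eq:eta}: at $\bm{\alpha}^{*}(\bm{\kappa}')$ the common value $\eta^{*}_j(\bm{\kappa}')$ makes the coupling constraint~\eqref{eq:coupling} tight, i.e.\ $\alpha^{*}_{j'}(\bm{\kappa}')=\mathpzc{T}_{\bm{\kappa}',j'}(\bm{\alpha}^{*}(\bm{\kappa}'),\eta^{*}_j(\bm{\kappa}'))$ for every $j'\in\J$ (this holds in both regimes of Lemma~\ref{lma:either}, whether $\eta^{*}_j=\lambda$ or $\eta^{*}_j=\nu$, because $\bm{\alpha}^{*}$ is the corresponding fixed point). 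With these two facts the $j'$-th component of the claimed inequality is equivalent to $C_{j'}(\bm{\alpha}^{*}(\bm{\kappa}'),\bm{\kappa}'')\geq C_{j'}(\bm{\alpha}^{*}(\bm{\kappa}'),\bm{\kappa}')$; that is, adjoining RRH $i$ to $\I_j$ must not lower any UE's achievable rate when the allocation is frozen at $\bm{\alpha}^{*}(\bm{\kappa}')$.

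The target UE $j$ is the easy component. Passing from $\bm{\kappa}'$ to $\bm{\kappa}''=\mathpzc{E}_{j,\{i\}}(\bm{\kappa}')$ moves $i$ out of the interference set $\R\setminus\I_j$ and into the cooperating set $\I_j$ in~\eqref{eq:sinr}: the interference contribution $p_i I_{ij}$ is deleted from the denominator of $\gamma_j$ while $i$ adds a coherent term to the JT numerator. Hence $\gamma_j$, and therefore $C_j$, does not decrease, which gives $\mathpzc{T}_{\bm{\kappa}'',j}(\bm{\alpha}^{*}(\bm{\kappa}'),\eta^{*}_j)\leq\alpha^{*}_j(\bm{\kappa}')$ --- exactly the $\mu\leq 1$ normalization appearing in the filter definition.

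The main obstacle is the components $j'\neq j$. Freezing $\bm{\alpha}=\bm{\alpha}^{*}(\bm{\kappa}')$ and flipping $\kappa_{ij}$ from $0$ to $1$ leaves every load unchanged except that of cell $i$, which rises to $\rho_i(\bm{\alpha}^{*}(\bm{\kappa}'),\bm{\kappa}'')=\rho_i(\bm{\alpha}^{*}(\bm{\kappa}'),\bm{\kappa}')+\alpha^{*}_j(\bm{\kappa}')$. For any UE $j'$ with $i\notin\I_{j'}$ this enlarges $I_{ij'}=|\bm{h}^{\mathsf{H}}_{ij'}\bm{w}_i|^2\rho_i$, so $\gamma_{j'}$ and $C_{j'}$ can strictly decrease; since the coupling constraint is already tight at $\bm{\alpha}^{*}(\bm{\kappa}')$ there is no slack to absorb this, and the naive monotonicity runs the wrong way. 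This is where the proof really has to work, and I do not expect it to go through unconditionally.

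I would close the gap by arguing that the cell newly recruited for JT to UE $j$ contributes no additional interference to the other served UEs --- either because its weighted channel $|\bm{h}^{\mathsf{H}}_{ij'}\bm{w}_i|$ to every $j'\neq j$ with $i\notin\I_{j'}$ is negligible (the coordinated beamformer of $i$ being steered to $j$), or by restricting the lemma to exactly that regime. Under this hypothesis $C_{j'}(\bm{\alpha}^{*}(\bm{\kappa}'),\bm{\kappa}'')=C_{j'}(\bm{\alpha}^{*}(\bm{\kappa}'),\bm{\kappa}')$ for $j'\neq j$, so by the equivalence of the first paragraph $\mathpzc{T}_{\bm{\kappa}'',j'}(\bm{\alpha}^{*}(\bm{\kappa}'),\eta^{*}_j)=\alpha^{*}_{j'}(\bm{\kappa}')$, and combining with the $j$-component yields the full vector inequality. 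I expect verifying, or correctly stating as a standing assumption, this cross-interference condition to be the delicate point, since it is precisely what keeps the one-step map $\mathpzc{H}_{\bm{\kappa}''}$ below $\bm{\alpha}^{*}(\bm{\kappa}')$ and thereby lets the subsequent SIF iteration of Lemma~\ref{lma:SIF} descend monotonically to the new fixed point.
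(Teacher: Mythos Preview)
Your component-wise strategy is exactly the paper's: for the target UE $j$ you argue that enlarging $\I_j$ raises $\gamma_j$ via~\eqref{eq:sinr}, hence $\mathpzc{T}_{\bm{\kappa}'',j}(\bm{\alpha}^{*}(\bm{\kappa}'),\eta^{*}_j)<\alpha^{*}_j(\bm{\kappa}')$; and for $j'\neq j$ you aim for equality because $\I_{j'}(\bm{\kappa}'')=\I_{j'}(\bm{\kappa}')$. That is literally the paper's two-line proof.

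Where you diverge is in rigor, not approach. The paper simply writes ``Since $\I_{k\neq j}(\bm{\kappa}'')=\I_{k\neq j}(\bm{\kappa}')$, we conclude that $\mathpzc{T}_{\bm{\kappa}'',k\neq j}(\bm{\alpha}^{*}(\bm{\kappa}'),\eta^{*}_{k\neq j})=\alpha^{*}_{k\neq j}(\bm{\kappa}')$'' and stops. It does \emph{not} address the point you flag --- that flipping $\kappa_{ij}$ raises $\rho_i$ at the frozen allocation and therefore raises $I_{ij'}=|\bm{h}^{\mathsf{H}}_{ij'}\bm{w}_i|^2\rho_i$ in the denominator of $\gamma_{j'}$ for any $j'$ with $i\notin\I_{j'}$. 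The load--interference coupling you trace through $\rho_i(\bm{\alpha},\bm{\kappa})=\sum_{j\in\J_i}\alpha_j$ is real in the paper's own model, and the paper offers no separate argument or standing assumption to neutralize it. So the ``delicate point'' you isolate is not something the paper handles more cleverly; it is something the paper's proof glosses over. Your proposed resolution (treat the newly recruited beam's cross-channel to other UEs as negligible, or state it as a hypothesis) is a reasonable way to make the step honest, but you should be aware that you would then be supplying an assumption the paper itself leaves implicit.
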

\begin{proof}
Let $\bm{\kappa}''=\mathpzc{E}_{j,\{i\}}(\bm{\kappa}')$. Then we have
$\I_j(\bm{\kappa}')\subset\I_j(\bm{\kappa}'')$, which leads to that
$\gamma_j(\bm{\kappa}'')>\gamma_j(\bm{\kappa}')$ according to~\eqref{eq:sinr}.
Therefore, we have $\mathpzc{T}_{\bm{\kappa}'',j}
(\bm{\alpha}^{*}(\bm{\kappa}'),\eta^{*}_j)
<\alpha^{*}_{j}(\bm{\kappa'})$. Hence the conclusion.
Since $\I_{k\neq j}(\bm{\kappa}'')=\I_{k\neq j}(\bm{\kappa}')$, we conclude
that $\mathpzc{T}_{\bm{\kappa}'',k\neq j}
(\bm{\alpha}^{*}(\bm{\kappa}'),\eta^{*}_{k\neq j})
=\alpha^{*}_{k\neq j}(\bm{\kappa'})$. Hence the conclusion.
\end{proof}

\begin{lemma}
    $\lim_{k\rightarrow\infty}\mathpzc{H}^{(k)}_{\bm{\kappa}''}(\bm{\alpha}_{\geq
    0},\eta^{*}_j(\bm{\kappa}'))\leq\bm{\alpha}^{*}(\bm{\kappa}')$ with $\bm{\alpha}_{\geq
    0}\in\mathbb{R}_{+}^{q}$.
\label{lma:H2}
\end{lemma}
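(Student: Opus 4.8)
The plan is to recognize the iteration in the statement as the fixed-point iteration of a standard interference function (SIF) and then combine its starting-point-independent convergence with the monotonicity already established in Lemma~\ref{lma:H1}. Concretely, fix $j$ and the scalar $\eta^{*}_j(\bm{\kappa}')$, and define the self-map $\Phi(\bm{\alpha}):=\mathpzc{H}_{\bm{\kappa}''}(\bm{\alpha},\eta^{*}_j(\bm{\kappa}'))$ on $\mathbb{R}^{q}_{+}$. By the same SIF argument that underlies Lemma~\ref{lma:SIF} (the property proved in~\cite{You:2016uy}), $\Phi$ is a standard interference function in $\bm{\alpha}$: it is positive, monotone (so $\bm{\alpha}\leq\bm{\alpha}'$ implies $\Phi(\bm{\alpha})\leq\Phi(\bm{\alpha}')$), and scalable. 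Consequently $\Phi$ has a unique fixed point $\bm{\alpha}^{\dagger}$, and the iterates $\mathpzc{H}^{(k)}_{\bm{\kappa}''}(\bm{\alpha}_{\geq 0},\eta^{*}_j(\bm{\kappa}'))=\Phi^{(k)}(\bm{\alpha}_{\geq 0})$ converge to $\bm{\alpha}^{\dagger}$ for every starting point $\bm{\alpha}_{\geq 0}\in\mathbb{R}^{q}_{+}$. Thus the limit in the statement equals $\bm{\alpha}^{\dagger}$, and it suffices to prove $\bm{\alpha}^{\dagger}\leq\bm{\alpha}^{*}(\bm{\kappa}')$.

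To locate $\bm{\alpha}^{\dagger}$ relative to $\bm{\alpha}^{*}(\bm{\kappa}')$, I would run the iteration from the convenient initial point $\bm{\alpha}^{*}(\bm{\kappa}')$. Lemma~\ref{lma:H1} states precisely that $\Phi(\bm{\alpha}^{*}(\bm{\kappa}'))=\mathpzc{H}_{\bm{\kappa}''}(\bm{\alpha}^{*}(\bm{\kappa}'),\eta^{*}_j(\bm{\kappa}'))\leq\bm{\alpha}^{*}(\bm{\kappa}')$, i.e. $\bm{\alpha}^{*}(\bm{\kappa}')$ is a super-solution of $\Phi$. Setting $\bm{\beta}^{(0)}=\bm{\alpha}^{*}(\bm{\kappa}')$ and $\bm{\beta}^{(k+1)}=\Phi(\bm{\beta}^{(k)})$, we get $\bm{\beta}^{(1)}\leq\bm{\beta}^{(0)}$, and then by monotonicity an induction gives $\bm{\beta}^{(k+1)}=\Phi(\bm{\beta}^{(k)})\leq\Phi(\bm{\beta}^{(k-1)})=\bm{\beta}^{(k)}$, so the sequence is coordinatewise non-increasing and bounded below. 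By the SIF convergence of Lemma~\ref{lma:SIF}, it converges to the unique fixed point, namely $\bm{\alpha}^{\dagger}$; since every term of a non-increasing sequence dominates its limit, $\bm{\alpha}^{\dagger}\leq\bm{\beta}^{(0)}=\bm{\alpha}^{*}(\bm{\kappa}')$.

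Combining the two parts yields $\lim_{k\rightarrow\infty}\mathpzc{H}^{(k)}_{\bm{\kappa}''}(\bm{\alpha}_{\geq 0},\eta^{*}_j(\bm{\kappa}'))=\bm{\alpha}^{\dagger}\leq\bm{\alpha}^{*}(\bm{\kappa}')$, which is the assertion. The step I would be most careful about is the transfer of the bound from the particular initialization $\bm{\alpha}^{*}(\bm{\kappa}')$ to an arbitrary $\bm{\alpha}_{\geq 0}$: this rests entirely on the starting-point independence of the SIF fixed point, so that the easy monotone estimate obtained from the convenient super-solution actually pins down the limit for all initializations. I would also verify that holding $\eta^{*}_j(\bm{\kappa}')$ fixed across iterations is consistent with the meaning of $\mathpzc{H}^{(k)}$ in the Notation, so that $\Phi$ is genuinely a self-map on $\mathbb{R}^{q}_{+}$ to which the SIF machinery of Lemma~\ref{lma:SIF} applies.
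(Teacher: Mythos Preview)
Your proposal is correct and takes essentially the same approach as the paper: the paper's proof is the one-line remark that $\mathpzc{H}^{(k)}_{\bm{\kappa}''}(\cdot,\eta^{*}_j)$ is monotone in its first argument and then invokes Lemma~\ref{lma:H1}, which is exactly the super-solution plus monotone-iteration argument you spell out. Your write-up simply makes explicit the SIF convergence and starting-point independence (from Lemma~\ref{lma:SIF}) that the paper leaves implicit when passing from the particular initialization $\bm{\alpha}^{*}(\bm{\kappa}')$ to an arbitrary $\bm{\alpha}_{\geq 0}$.
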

\begin{proof}
    Since $\mathpzc{H}^{(k)}_{\bm{\kappa}''}(\mathsf{var},\eta^{*}_j)$ is
    monotonic in $\mathsf{var}$, we reach the conclusion by Lemma~\ref{lma:H1}.
\end{proof}

\begin{theorem}
[\small{Partial Optimality of $\mathpzc{F}_{j,\{i\}}$}]
For any $\bm{\kappa}'$, $j\in\J$, and $\mathpzc{r}\subseteq\R$, $\max_{\bm{\alpha}}\{\norm{\bm{\eta}}_1:
\eqref{eq:coupling}-\eqref{eq:fairness},
\bm{\kappa}=\mathpzc{F}_{j,\{i\}}(\bm{\kappa}')\}\geq\max_{\bm{\alpha}}\{\norm{\bm{\eta}}_1:\eqref{eq:coupling}-\eqref{eq:fairness}, \bm{\kappa}=\bm{\kappa}'\}$.
\label{thm:opt_condition}
\end{theorem}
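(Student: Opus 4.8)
The plan is to read off the two branches of the filter $\mathpzc{F}_{j,\{i\}}$ in~\eqref{eq:funcF}. When the $\mathsf{condition}$ fails we have $\mathpzc{F}_{j,\{i\}}(\bm{\kappa}')=\bm{\kappa}'$, the two maximization problems are identical, and the inequality holds with equality; this settles the trivial branch. All the work is in the branch $\mathpzc{F}_{j,\{i\}}(\bm{\kappa}')=\bm{\kappa}''$, where $\bm{\kappa}''=\mathpzc{E}_{j,\{i\}}(\bm{\kappa}')$ augments the cooperating set $\I_j$ by BS $i$. There I would prove the inequality by exhibiting one feasible point of the $\bm{\kappa}''$-problem whose objective equals the optimum of the $\bm{\kappa}'$-problem; since the $\bm{\kappa}''$-problem maximizes $\norm{\bm{\eta}}_1$, this at once gives the claimed ``$\geq$''.

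For the candidate I would keep the QoS vector of the unfiltered optimum, $\bm{\eta}=\bm{\eta}^*(\bm{\kappa}')$, and pair it with the resource allocation $\bm{\alpha}''=\lim_{k\to\infty}\mathpzc{H}^{(k)}_{\bm{\kappa}''}(\bm{\alpha}_{\geq 0},\eta^*_j(\bm{\kappa}'))$, which exists by Lemma~\ref{lma:SIF}. Two of the three constraints are then immediate: the fairness constraint~\eqref{eq:fairness} is inherited from Lemma~\ref{lma:eta}, since the components of $\bm{\eta}^*(\bm{\kappa}')$ are all equal and are left untouched; and the coupling constraint~\eqref{eq:coupling} holds with equality because $\bm{\alpha}''$ is by construction a fixed point of $\mathpzc{H}_{\bm{\kappa}''}(\cdot,\eta^*_j(\bm{\kappa}'))$, i.e.\ $\alpha''_j=\mathpzc{T}_{\bm{\kappa}'',j}(\bm{\alpha}'',\eta^*_j(\bm{\kappa}'))$ for every $j\in\J$.

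The remaining, and decisive, task is the resource/capacity constraint~\eqref{eq:limited}, which I would verify BS by BS. For each $k\neq i$ the serving set $\J_k$ is unchanged from $\bm{\kappa}'$ to $\bm{\kappa}''$, so Lemma~\ref{lma:H2} gives $\bm{\alpha}''\leq\bm{\alpha}^*(\bm{\kappa}')$ componentwise and hence $\rho_k(\bm{\alpha}'',\bm{\kappa}'')\leq\rho_k(\bm{\alpha}^*(\bm{\kappa}'),\bm{\kappa}')\leq\bar{\rho}$, the last step combining Lemma~\ref{lma:bar_rho} with $\bm{\alpha}^*(\bm{\kappa}')\leq\bm{\alpha}(\bm{\kappa}')$ from~\eqref{eq:opt_alpha}; the fronthaul parts of~\eqref{eq:limited} are controlled by the bound $\eta^*_j(\bm{\kappa}')\leq\nu$ coming from Lemma~\ref{lma:eta}. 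The one load that can grow is $\rho_i$, because UE $j$'s allocation now feeds BS $i$. This is precisely what the $\mathsf{condition}$ in~\eqref{eq:funcF} is built to control: seeding the monotone, decreasing SIF iteration at $\bm{\alpha}^*(\bm{\kappa}')$ (Lemma~\ref{lma:H1}) and propagating the one-step bound $\rho_i(\mathpzc{H}_{\bm{\kappa}''}(\bm{\alpha}^*(\bm{\kappa}'),\eta^*_j(\bm{\kappa}')),\bm{\kappa}'')\leq\rho_i(\bm{\alpha}^*(\bm{\kappa}'),\bm{\kappa}'')$ down to the limit $\bm{\alpha}''$ should yield $\rho_i(\bm{\alpha}'',\bm{\kappa}'')\leq\bar{\rho}$, so~\eqref{eq:limited} also holds at BS $i$.

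I expect this load bound at BS $i$ to be the main obstacle. Adding a cooperating BS acts in two opposing directions at once: it raises UE $j$'s SINR and so shrinks every coordinate of the required allocation (Lemmas~\ref{lma:H1} and~\ref{lma:H2}), yet it simultaneously enlarges $\J_i$ and pushes $\rho_i$ up; only the filter $\mathsf{condition}$ certifies that the net effect keeps BS $i$ within $\bar{\rho}$, and the delicate point is transferring that single-step certificate to the fixed point $\bm{\alpha}''$ via monotonicity. Once feasibility is in hand, the constructed point attains $\norm{\bm{\eta}^*(\bm{\kappa}')}_1$, which by Theorem~\ref{thm:opt_alpha} is exactly the optimum of the $\bm{\kappa}'$-problem; hence the optimum of the $\bm{\kappa}''$-problem is no smaller, which is the assertion.
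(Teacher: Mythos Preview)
Your proposal is correct and follows essentially the same route as the paper: split on the two branches of $\mathpzc{F}_{j,\{i\}}$, dispose of the trivial branch at once, and in the branch $\mathpzc{F}_{j,\{i\}}(\bm{\kappa}')=\bm{\kappa}''$ combine Lemma~\ref{lma:H1} with Lemma~\ref{lma:H2} to argue that the constraints~\eqref{eq:coupling}--\eqref{eq:limited} remain satisfied at the old QoS level $\bm{\eta}^*(\bm{\kappa}')$, whence $\eta_j(\bm{\kappa}'')\geq\eta_j(\bm{\kappa}')$. The paper's own proof is terser---it simply asserts that ``all the active constraints in~\eqref{eq:coupling} and~\eqref{eq:limited} are relaxed'' after invoking Lemma~\ref{lma:H1}, and then appeals to Lemma~\ref{lma:H2}---whereas you make the argument concrete by exhibiting the fixed point $\bm{\alpha}''$ as an explicit feasible witness and checking~\eqref{eq:limited} BS by BS; in particular your discussion of the load at BS~$i$ and the role of the filter $\mathsf{condition}$ spells out a step the paper leaves implicit.
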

\begin{proof}
If $\mathpzc{F}_{j,\{i\}}(\bm{\kappa}')=\bm{\kappa}'$, then we have that the
equality in the statement holds
element-wisely for $\bm{\alpha}$ in the theorem. For the other case that
$\mathpzc{F}_{j,\{i\}}(\bm{\kappa}')=\bm{\kappa}''$, we have by
Lemma~\ref{lma:H1} that
$\eta_j^{*}(\bm{\kappa}')\leq\bm{\alpha}^{*}(\bm{\kappa}')/\mathpzc{T}_{\bm{\kappa}'',
j}(\bm{\alpha}^{*}(\bm{\kappa}'),1)$.
Thus all the active constraints in~\eqref{eq:coupling} and~\eqref{eq:limited}
are relaxed. Combined with Lemma~\ref{lma:H2}, we have
$\eta_j(\bm{\kappa}'')\geq\eta_j(\bm{\kappa}')$. Hence the conclusion.
\end{proof}

\subsection{Algorithm Design}

The proposed algorithm is shown in Algorithm~\ref{alg:alg}, of which the
time-consuming part is on computing the convergence point
$\bm{\alpha}^{*}(\bm{\kappa})$ of equation~\eqref{eq:alpha_kappa}
and~\eqref{eq:alpha_kappa2}, with respect to the CoMP association pattern
$\bm{\kappa}$. As stated in Line~\ref{alg:l:kappa}, the fixed-point iterations
are done with $\sum_{j=1}^{q}|\R_{(\ell(j))}|$ rounds. Suppose the
time-complexity of the algorithm for computing the fixed point
of~\eqref{eq:alpha_kappa} and~\eqref{eq:alpha_kappa2} is in $O(K)$. Then the
time-complexity of Algorithm~\ref{alg:alg} is in $O(qmK)$.  
\begin{algorithm}[h!b]
\KwIn{
$p_i$, $\bm{\kappa}'$, $\bm{h}_{ij}$, $\bm{w}_i$, and $t_j(\tau)$, for $i\in\R, j\in\J$
}
\KwOut{
    $\bm{\kappa}''$,
    $\bm{\alpha}(\bm{\kappa}'')$, and $\bm{\eta}(\bm{\kappa}'')$.
}
Let $r(i,j)$ be the $j_{\textnormal{th}}$ element in the set $\R_{\ell(i)}$. \;
$\bm{\kappa}''=\mathpzc{F}_{2,\{r(1,1)\}}\circ\mathpzc{F}_{1,\{r(1,2)\}}\circ\cdots\circ\mathpzc{F}_{q,\{r(q,|\R_{\ell(q)}|)\}}(\bm{\kappa'})$
\label{alg:l:kappa} \;
$\bm{\alpha}(\bm{\kappa}'')$ and $\bm{\eta}(\bm{\kappa}'')$ are computed
by~\eqref{eq:opt_alpha} and~\eqref{eq:eta}, respectively.
\caption{CoMP-cell selection and resource allocation}
\label{alg:alg}
\end{algorithm}

\section{Simulation}
\label{sec:simulation}

We deploy $3$ C-RAN clusters with hexagonal coverage region ($500$ meters
radius). There is an RCC located in the center of each hexagon, along with a BBU
pool. In each hexagonal region, several BSs are deployed with an RRH\@. The RRH
is connected with a Common Public Radio Interface (CPRI) based fronthaul of $2.5$ Gbps
capacity limit to each RRH~\cite{Anonymous:vkX0Zmz-}. Multiple UEs are randomly and
uniformly distributed in each hexagonal region. The network operates at $2$ GHz.
Each RU is set to $180$ KHz bandwidth and the bandwidth for each cell is $20$
MHz. We remark that the simulation setting of bandwidth follows the 3GPP
standardization document~\cite{release13-1}. The noise power spectral density is
set to $-174$ dBm/Hz. The transmit power of BSs on each RU is $200$ mW. The path
loss between BS and UE follows the standard 3GPP micro models~\cite{release9}.
The shadowing coefficients are generated by the log-normal distribution with $3$
dB standard deviation~\cite{release9}. 

\begin{figure}[!ht]
\vspace{-0.0cm}
    \centering
    \includegraphics[width=\linewidth]{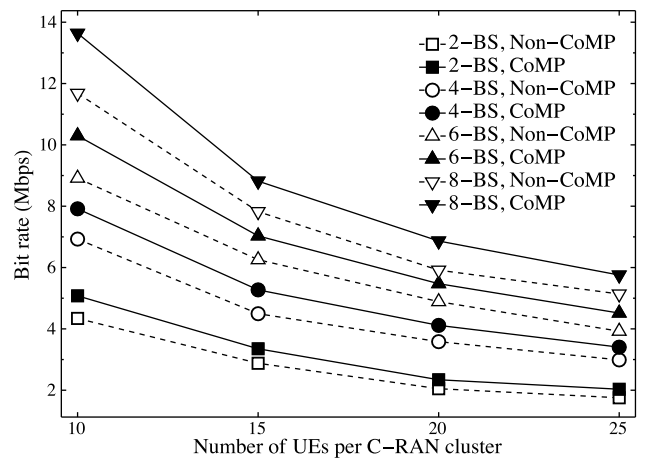}
    \caption{Bit rate vs.~number of UEs ($\bar{\rho}=1.0$).}
\vspace{-0.0cm}
\label{fig:Rate_UE}
\end{figure}

In the numerical results below, we compare the QoS performance of CoMP with that
of non-CoMP case.  In both cases, the time-frequency resource allocation is
optimized by Theorem~\ref{thm:opt_alpha}. The evaluation is done under the
scenarios with different number of BSs, UEs, and values of the maximum resource
limit $\bar{\rho}$.  \figurename~\ref{fig:Rate_UE} shows the QoS performance
with respect to different amount of UEs and BSs in each C-RAN cluster. We deploy
respectively $2$, $4$, $6$, $8$ BSs (RRHs) that are capable to cooperatively serve UEs via
CoMP within each C-RAN cluster. Intuitively, the QoS depends on user density and
resource sharing.  Numerically, the QoS decreases with the increase
of UEs' density.  Besides, the QoS can be enhanced by deploying more
cooperative RRHs in a C-RAN cluster. Compared to the non-CoMP case, the QoS
performance always benefits from optimizing the CoMP-cell selection. When UEs are
densely distributed, one can achieve almost the same QoS enhancement by CoMP within C-RAN
cluster, as by increasing the BS density. With the increase of BS density, the UEs
gain more on QoS improvement. On average, the QoS is improved via CoMP by
$11.6\%$. In \figurename~\ref{fig:Rate_BS}, we compare the QoS performance for
the cases with different maximum available resource constraint, i.e.
$\bar{\rho}$. The network benefits more via CoMP with a larger value of
$\bar{\rho}$. That means, the cooperation among BSs via CoMP
would be crucial, with sufficient available time-frequency resource in the
network. On average, the QoS improvement of CoMP can reach to $11.3\%$.

In general, the possible improvement on QoS through CoMP and resource allocation
is sensitive to the network density (for both BS and UE) as well as the resource
limit in each cell. In 5G, the network is likely to be ultra-densely deployed
with BSs (e.g.~small/femto stations), resulting in that there could be more
stations than UEs in a cell coverage of 5G cellular networks.  Further, as the
mobile systems of 5G are broadening their spectrum, the available resource would
be more sufficient in the next generation networks. 

\begin{figure}[!ht]
\vspace{-0.0cm}
    \centering
    \includegraphics[width=\linewidth]{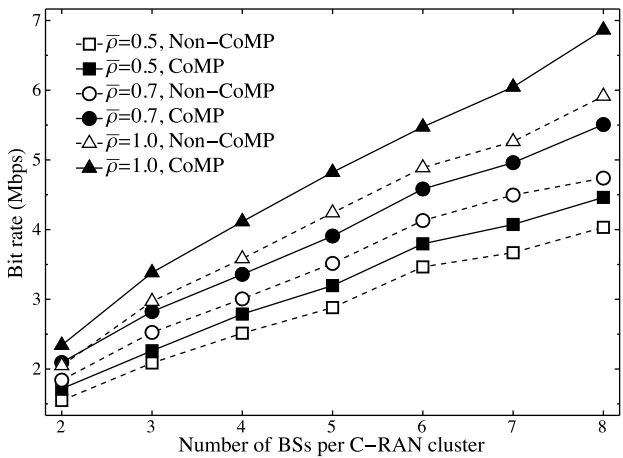}
    \caption{Bit rate vs.~number of BSs ($20$ UEs per cluster).}
\vspace{-0.0cm}
\label{fig:Rate_BS}
\end{figure}

\section{Conclusion}
\label{sec:conclusion}

For the joint CoMP-cell selection and resource allocation problem with
fronthaul-constrained C-RAN, theoretical analysis regarding the computational
complexity has been given. A joint optimization problem of cell selection and
resource allocation has been proposed. 

\section*{Acknowledgement}
This work has been supported by the Swedish Research Council and the
Link\"{o}ping-Lund Excellence Center in Information Technology (ELLIIT), Sweden,
and the European Union Marie Curie project MESH-WISE (FP7-PEOPLE-2012-IAPP\@:
324515), DECADE (H2020-MSCA-2014-RISE\@: 645705), and WINDOW
(FP7-MSCA-2012-RISE\@: 318992). The work of D. Yuan has been
carried out within European FP7 Marie Curie IOF project 329313.

\bibliographystyle{IEEEtran}
\bibliography{ref}
\end{document}